\pdfoutput=1
\documentclass[11pt]{article}
\usepackage{fullpage}
\usepackage{amssymb}
\usepackage{amsmath}
\usepackage{amsthm}
\usepackage{graphicx}
\usepackage{color}

%

%

%
%
%
%
%
%
%
%
%

%
%

%
%
%
%
%
%
%
%
%
%
%
%
%
%
%
%

\newtheorem{theorem}{Theorem}[section]
\newtheorem{lemma}[theorem]{Lemma}

\newtheorem{corollary}[theorem]{Corollary}

\newtheorem{definition}[theorem]{Definition}

\def\compactify{\itemsep=0pt \topsep=0pt \partopsep=0pt \parsep=0pt}
\makeatletter
\renewcommand{\paragraph}{%
  \@startsection{paragraph}{4}%
  {\z@}{3.25ex \@plus 1ex \@minus .2ex}{-1em}%
  {\normalfont\normalsize\bfseries}%
}
\makeatother

\DeclareMathOperator{\rank}{rank}

\newcommand{\aset}[1]{\{#1\}}
\DeclareMathOperator{\mincut}{mincut}
\def\tcvector{{\Phi}}

\newcommand{\R}{\mathbb R}

\newcommand{\B}{\{ 0,1 \}}

\newcommand{\veps}{\varepsilon}

\title{Mimicking Networks and Succinct Representations \\ of Terminal Cuts%
\thanks{This work was supported in part by The Israel Science Foundation
(grant \#452/08), by a US-Israel BSF grant \#2010418,
and by the Citi Foundation.}
}

\author{Robert Krauthgamer
\qquad\qquad Inbal Rika
\\ Weizmann Institute of Science
\\ \texttt{\{robert.krauthgamer,inbal.rika\}@weizmann.ac.il}
}

\begin{document}

\maketitle

\thispagestyle{empty}
\setcounter{page}{0}

\begin{abstract}
Given a large edge-weighted network $G$ with $k$ terminal vertices,
we wish to compress it and store, using little memory,
the value of the minimum cut (or equivalently, maximum flow)
between every bipartition of terminals.
One appealing methodology to implement a compression of $G$
is to construct a \emph{mimicking network}:
a small network $G'$ with the same $k$ terminals,
in which the minimum cut value between every bipartition of terminals
is the same as in $G$.
This notion was introduced by Hagerup, Katajainen, Nishimura, and Ragde [JCSS '98],
who proved that such $G'$ of size at most $2^{2^k}$ always exists.
Obviously, by having access to the smaller network $G'$,
certain computations involving cuts can be carried out much more efficiently.

We provide several new bounds, which together narrow
the previously known gap from doubly-exponential to only singly-exponential,
both for planar and for general graphs.
Our first and main result is that every $k$-terminal planar network
admits a mimicking network $G'$ of size $O(k^2 2^{2k})$,
which is moreover a minor of $G$.
On the other hand, some planar networks $G$ require $|E(G')| \ge \Omega(k^2)$.
For general networks, we show that certain bipartite graphs
only admit mimicking networks of size $|V(G')| \geq 2^{\Omega(k)}$,
and moreover, every data structure that stores the minimum cut value between
all bipartitions of the terminals must use $2^{\Omega(k)}$ machine words.
\end{abstract}

\newpage

\section{Introduction}
\label{sec:intro}

These days, more than ever, we deal with huge graphs such as social networks, communication networks, roadmaps and so forth. But even when our main interest is only in a small portion of the input graph $G$, we still need to process all or most of it in order to answer our query, since the runtime and memory requirements of many common graph algorithms depend on the input (graph) size. Therefore, a natural question is whether we can find a smaller graph  $G'$ that exactly (or approximately) preserves some property of the original graph such as distances, cuts and connectivity. This basic concept is known as a \emph{graph compression} and was first introduced by Feder and Motwani \cite{FM95}, although their definition was slightly different technically. They require that the compressed graph has fewer edges than the original graph, and that each graph can be quickly computed from the other one.
They have demonstrated how this paradigm leads to significantly improved
running time by implementing it for several graph problems.

Yet another significant advantage of the compressed graph $G'$ is that it requires far less memory then storing the original graph $G$, which could be critical for machines with limited resources such as smartphones, assuming that the preprocessing can be executed in advance on much more powerful machines. This paradigm becomes indispensable when computations on the compressed graph are to be preformed repeatedly (after a one-time preprocessing).

We focus on cuts and flows, which are of fundamental importance in computer science, engineering, and operations research, because of their frequent usage in many application areas.
Specifically, we study the compression of a large graph $G$
containing $k$ ``important'' vertices (called terminals),
into a smaller graph $G'$ containing the same terminals,
while maintaining the following condition:
the minimum cut between every bipartition of the terminals
has exactly the same value in $G$ and in $G'$.
The above cut condition can be also stated in terms of maximum flow,
because it effectively deals with the single-source single-sink case,
for which we have the max-flow min-cut theorem.
We now turn to define this problem more formally,
restricting our attention (throughout) to undirected graphs.

A \emph{network} $(G,c)$ is a graph $G$ with an edge-costs function $c: E(G) \to \R^+$. The \emph{size} of a network is its number of vertices of $G$.
The network is called a \emph{$k$-terminal network}
if the graph $G$ has $k$ distinguished vertices called \emph{terminals},
denoted $Q=\{q_1,\ldots,q_k\}\subseteq V(G)$.
In such a network, a cut $(W,V(G)\setminus W)$ is said to be \emph{$S$-separating} if it separates the terminals subset $S\subset Q$ from the remaining terminals $\bar{S}:=Q \setminus S$, i.e. if $W\cap Q$ is either $S$ or $\bar S$.
When clear from the context,
$(W,V(G)\setminus W)$ may refer not only to a bipartition of the vertices,
but also to its corresponding \emph{cutset} (set of edges crossing the cut). %
The cost of a cut $(W,V(G)\setminus W)$ is the sum of costs of all the edges in the cutset.
We let $\mincut_{G,c}(S,\bar{S})$ denote an $S$-separating cut in the network $(G,c)$
of minimum cost (breaking ties arbitrarily).
With a slight abuse of notation,
we use the same notation to denote also the cost of the that cut.
We also omit the subscript $c$ when clear from the context.

\begin{definition}[Mimicking Network \cite{HKNR98}]
\label{mimicking}
   Let $(G,c)$ be a $k$-terminal network.
   A \emph{mimicking network} of $(G,c)$ is a $k$-terminal network $(G',c')$ with the same set of
   terminals $Q$, such that for all $S\subset Q$,%
   \footnote{Throughout, we omit the trivial exclusion $S\neq \emptyset,Q$.}
$$\mincut_{G',c'}(S,\bar{S}) = \mincut_{G,c}(S,\bar{S}).$$
\end{definition}

The above definition (albeit for directed networks) was introduced by
Hagerup, Katajainen, Nishimura, and Ragde \cite{HKNR98},
who proved that every $k$-terminal network $(G,c)$
admits a mimicking network of size at most $2^{2^k}$.
Subsequently, Chaudhuri, Subrahmanyam, Wagner, and Zaroliagis \cite{CSWZ00}
studied specific graph families, showing an improved upper bound $O(k)$
for graphs $G$ that have bounded treewidth.
For the special case of outerplanar graphs $G$,
the mimicking network $G'$ they construct is furthermore outerplanar.
Some of these previous results hold also for directed networks.

The only lower bound we are aware of on the size of mimicking networks is
$k+1$ for every $k>3$, even for a star graph, due to \cite{CSWZ00}.
For $k=4,5$ they further show a matching upper bound.
These results are summarized in Table \ref{tab:results}.
We mention that several other variants of the problem were studied in the
literature, in particular when cut values are preserved approximately,
see Section \ref{sec:related} for details.

\subsection{Our Results}
\label{sec:results}

\paragraph{(a) Upper bounds.}
We first prove (in Section~\ref{sec:UB}) a new upper bound for planar graphs,
which significantly improves over the bound that follows from previous work
(namely, $2^{2^k}$ known for general graphs \cite{HKNR98}).
See also Table~\ref{tab:results} for the known bounds.

\begin{theorem}\label{UB planar graphs}
Every planar $k$-terminal network $(G,c)$ admits a mimicking network of size at most $O(k^2 2^{2k})$, which is furthermore a minor of $G$.
\end{theorem}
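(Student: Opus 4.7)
My plan is to adapt the equivalence-class contraction idea of \cite{HKNR98} — merge vertices that are indistinguishable by any minimum terminal cut — and leverage planarity to bound the number of classes by $O(k^2 2^{2k})$, well below the generic $2^{2^{k-1}}$. Concretely, for each bipartition $\{S,\bar S\}$ with $\emptyset\neq S\subsetneq Q$, I would choose a canonical minimum $S$-separating cut $C_S$ (with $C_S=C_{\bar S}$), declare $u\sim v$ iff $u$ and $v$ lie on the same side of $C_S$ for every bipartition, and let $G'$ be obtained by contracting each equivalence class (summing parallel-edge weights). Then $G'$ is automatically a minor of $G$, and the standard submodularity argument of \cite{HKNR98} shows that $G'$ preserves the value of every minimum terminal cut, so it is a mimicking network.

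The heart of the proof is to bound the number of equivalence classes by $O(k^2 2^{2k})$ using the planar embedding of $G$. I would choose the canonical cut $C_S$ via the planar-dual viewpoint: a minimum $S$-cut corresponds to a collection of simple closed curves in the plane (the cutset edges ``thickened'' through duality) that together separate the $S$-terminals from the $\bar S$-terminals. Since the terminal set has only $k$ elements, such a cutset has at most $O(k)$ connected components — intuitively at most one curve per ``run'' of $S$-terminals along the embedding. Summing over the $\le 2^{k-1}$ bipartitions gives $N=O(k\cdot 2^{k})$ simple closed curves in total.

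The key geometric input I would need is that the canonical choice of cuts makes any two of these curves (coming from distinct bipartitions) cross in only $O(1)$ points. This should follow from the standard non-crossing property of planar dual shortest paths combined with submodular uncrossing of min cuts — any two crossings can be eliminated without increasing either cut's cost by swapping the cut-pieces between intersection points. Granting this, Euler's formula bounds the number of cells in the arrangement of $N$ such curves by $O(N^{2})=O(k^{2}2^{2k})$; every vertex of $G$ sits in exactly one cell, and two vertices in the same cell are $\sim$-equivalent, so $|V(G')|=O(k^{2}2^{2k})$.

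The main obstacle is justifying the structural claims in the previous two paragraphs: that each canonical min cut consists of at most $O(k)$ Jordan curves aligned with the embedding, and — more delicately — that curves coming from different bipartitions pairwise cross at most a constant number of times. The latter is subtle because each bipartition is the minimizer of a different cut function, so classical single-function uncrossing must be extended across bipartitions, likely by an inductive uncrossing procedure that iteratively replaces a crossing pair of cuts with their submodular ``uncrossings'' without violating optimality for either. A further subtlety is that terminals need not lie on a common face; I would handle this by augmenting $G$ with a very-high-cost apex vertex attached to all terminal-incident faces (which does not affect any terminal min cut and reduces to the case where all terminals lie on a common face), or by working throughout in the dual augmented by terminal faces.
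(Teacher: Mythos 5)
Your high-level plan — contract the pieces of $G$ after removing all the minimum terminal cutsets, move to the planar dual where cutsets become circuits, and invoke Euler's formula to count cells — is exactly the skeleton of the paper's proof, and the final arithmetic ($N=O(k2^k)$ curves, pairwise bounded crossings, hence $O(N^2)$ cells) would indeed land on $O(k^2 2^{2k})$. But the key technical claim you flag as ``the main obstacle'' is genuinely where the proof lives, and your proposed route to it is different from (and weaker than) what the paper actually establishes, and is probably not salvageable as stated.

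You want: each pair of dual curves coming from \emph{distinct} bipartitions crosses at most $O(1)$ times, to be obtained by ``submodular uncrossing of min cuts'' extended across two different cut functions. The trouble is that uncrossing two curves $C_1 \subseteq E_S^*$ and $C_2\subseteq E_T^*$ means swapping arcs between intersection points, which in the primal replaces $(A,\bar A)$ and $(B,\bar B)$ by $(A\cap B, \overline{A\cap B})$ and $(A\cup B, \overline{A\cup B})$. Submodularity controls the sum of costs, but there is no guarantee that the first of these still separates $S$ and the second still separates $T$ (or vice versa); for interleaved subsets $S,T$ it generally will not, and then neither new cut certifies optimality of either original cut. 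So single-function uncrossing does not export to this cross-bipartition setting, and you have no mechanism to drive the pairwise crossing count down to $O(1)$. Nor is there any reason it should be $O(1)$ a priori: a single pair of curves from two different bipartitions can legitimately cross $\Theta(k)$ times.

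The paper does not need or prove the per-pair claim. Its Lemma on two cutsets proves a \emph{total} bound: $G\setminus(E_S\cup E_T)$ has at most $3k$ connected components, by a clean primal argument --- if there are too many ``mixed'' components, one of them contains no terminal, and shifting it to the other side of whichever of $E_S$, $E_T$ is locally heavier strictly decreases one of the two cut costs. Dualizing and applying Euler converts this into ``at most $6k$ vertices of degree $>2$ in $G^*[E_S^*\cup E_T^*]$'', which is the quantity that is then summed over all $O(2^{2k})$ pairs $(S,T)$. This totalled $O(k)$ bound per pair of cutsets does exactly the work your hoped-for $O(1)$ per pair of curves would do, but is actually provable, and it bypasses the cross-bipartition uncrossing problem entirely. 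If you replace your per-curve crossing claim with this per-cutset meeting-vertex bound (together with a degree bound on vertices of $G^*[E_S^*]$, which contributes the extra factor of $k$ in the final count), the rest of your outline goes through.

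A smaller issue: you propose to contract $\sim$-equivalence classes (vertices on the same side of every canonical cut). These classes need not be connected, so the resulting graph is not automatically a minor of $G$. Your arrangement counting actually bounds the number of \emph{cells}, i.e., connected components of $G\setminus\hat E$, which refine the $\sim$-classes; contracting those (as the paper does) is what gives a minor and is what your cell count controls. The apex-vertex trick to put all terminals on one face is not needed for this argument and does not appear in the paper.
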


Notice that our theorem constructs for an input graph $G$
a mimicking network that is actually a minor of it,
and thus preserves additional properties of $G$ such as planarity.

\begin{table}[t]
\vspace{.1in}
\begin{center}
\begin{tabular}{|l|cl|cl|}
\hline
Graph family
  & \multicolumn{2}{l|}{\ Lower bounds}
  & \multicolumn{2}{l|}{\ Upper bounds} \\
\hline
\hline
General graphs
  & $2^{\Omega(k)}$ & Theorem~\ref{LB general graphs}
  & $2^{2^k}$ & \cite{HKNR98} \\
\hline
Planar graphs
  & $|E(G')|\ge \Omega(k^2)$ & Theorem~\ref{LB planar graph}
  & $O(k^2 2^{2k})$ & Theorem~\ref{UB planar graphs} \\
\hline
Bounded treewidth
  & &
  & $O(k)$ & \cite{CSWZ00} \\
\hline
Star graphs
  & $k+1$ & \cite{CSWZ00}
  & & \\
\hline
\end{tabular}
\end{center}
  \caption{Known bounds for the size of mimicking networks}
  \label{tab:results}
\vspace{-.1in}
\end{table}

\paragraph{(b) Lower bounds.}
We further provide (in Section~\ref{sec:LB}) two nontrivial lower bounds.
See Table~\ref{tab:results} for comparison with the known bounds.
The following theorem addresses general graphs,
and narrows the previous doubly-exponential gap
(between $k+1$ and $2^{2^k}$) to be only singly-exponential.

\begin{theorem}\label{LB general graphs}
For every $k>5$ there exists a $k$-terminal network
such that every mimicking network of it has size $2^{\Omega(k)}$.
This holds even for bipartite networks with all the terminals
on one side and all the non-terminals on the other side.
\end{theorem}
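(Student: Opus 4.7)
I would use a counting argument. The plan is to construct a family of bipartite $k$-terminal inputs with many pairwise distinct cut functions, then upper-bound the number of cut functions realizable by any $k$-terminal network of size $s$, and compare the two bounds to conclude $s = 2^{\Omega(k)}$.

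For each $T \subseteq Q$, let $v_T$ be a potential non-terminal with a unit-weight edge to every $q \in T$. For a sub-collection $\mathcal{T} \subseteq 2^Q$, let $G_{\mathcal{T}}$ be the bipartite network whose non-terminals are $\{v_T : T \in \mathcal{T}\}$; its cut function is $f_{\mathcal{T}}(S) = \sum_{T \in \mathcal{T}} \phi_T(S)$, where $\phi_T(S) := \min(|T \cap S|,\,|T \cap \bar S|)$. The key combinatorial claim I would prove is that there exist $m = 2^{\Omega(k)}$ subsets $T_1, \ldots, T_m \subseteq Q$ whose functions $\phi_{T_i}$ are linearly independent inside the $(2^{k-1}-1)$-dimensional space of nonzero symmetric functions on $Q$. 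This should follow either by M\"obius inversion on the subset lattice, or by starting from the pair-separation indicators ($\phi_T$ for $|T| = 2$, which span the edge-indicator subspace) and extending into higher layers of subsets. Given this independence, the $2^m$ sub-collections $\mathcal{T} \subseteq \{T_1, \ldots, T_m\}$ yield $N := 2^{2^{\Omega(k)}}$ pairwise distinct cut functions, all realized by bipartite inputs of the required form.

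For the upper bound, observe that each $f_{\mathcal{T}}$ is integer-valued and bounded by $M = O(k \cdot 2^k)$. I would argue that any size-$s$ mimicking network $(G', c')$ realizing such an $f$ may be taken with integer weights bounded by $M' = \poly(M, s)$: after fixing the combinatorial type of $G'$ and an ``optimal-cut assignment'' (a choice of $S$-separating cut in $G'$ for each $S \subseteq Q$), the realizability conditions form a rational polyhedron in weight space, and any integer-valued $f$ has a realization at a vertex with modest denominator that can be rescaled to integer weights. The number of integer-weighted $k$-terminal networks of size $s$ is then at most $\binom{s}{k} (M'+1)^{\binom{s}{2}} = 2^{O(s^2 k)}$, and equating with $N = 2^{2^{\Omega(k)}}$ forces $s^2 k \ge 2^{\Omega(k)}$, i.e., $s = 2^{\Omega(k)}$.

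The principal obstacle is the integerization / rounding step, because the mincut constraints on $c'$ involve nonlinear $\min$ operations. My clean route is to fix the optimal-cut assignment first, converting the constraints into an ordinary LP whose rational vertices provide the desired integer realizations; the total count over the at-most $(2^s)^{2^k}$ such assignments is absorbed into $2^{O(s^2 k)}$. A fallback that yields the same conclusion is to bound the number of distinct cut functions directly by a Warren / Milnor--Thom count of sign patterns of the linear forms defining each optimal-cut assignment, giving a $2^{O(\poly(s,k))}$ bound without an explicit rounding step.
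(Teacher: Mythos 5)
Your proposal takes a genuinely different route from the paper's proof. The paper uses a \emph{single} hard instance: a complete bipartite graph with $l=\binom{k}{2k/3}$ non-terminals and weights $1$ vs.\ $2+\veps$ engineered so that every $\mincut_G(S_i,\bar S_i)$ is unique with a prescribed shape (Lemma~\ref{min-cut in bipartite graphs}). It forms the cutset--edge incidence matrix $A_G$, proves $\rank(A_G)\ge l$ by an explicit triangularity-style argument (Lemma~\ref{rank AG=m}), and then invokes Lemma~\ref{rank A is root size mimicking}: after a small \emph{random real} perturbation of $c$, any candidate mimicking network with fewer than $\rank(A_G)$ edges would force the $m$-vector of cut values into a proper subspace, an event of probability zero, and a union bound over the finitely many $0/1$ incidence matrices finishes it. Working over the reals with a measure-zero argument is exactly what lets the paper bypass the discretization problem your counting framework has to fight. (Your approach is in fact closer in spirit to the paper's data-structure lower bound in Section~\ref{app:LB4DS}, which also uses a $2^{\Omega(k)}$-size family of perturbed instances together with a pigeonhole.)

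Within your framework there are two genuine gaps. First, your key combinatorial claim --- that $2^{\Omega(k)}$ of the functions $\phi_T(S)=\min(|T\cap S|,|T\cap\bar S|)$ are linearly independent --- plays the role of Lemma~\ref{rank AG=m} but is a different statement about a different matrix, and neither of your suggested routes (``M\"obius inversion,'' ``extend the pair indicators'') is carried out; with unit-weight stars the min cuts are not unique and the matrix is not obviously triangularizable the way the paper's $1$-vs-$(2+\veps)$ weights make it. Second, the counting step is quantitatively off in several places: LP vertex denominators are exponential in $s^2$ by the Hadamard bound, so $M'\ne\poly(M,s)$; the number of optimal-cut assignments $(2^s)^{2^k}$ is certainly not ``absorbed into $2^{O(s^2k)}$'' (it needn't appear in the final count, but the sentence as written is false); and clearing denominators rescales the target $f$, so you must argue pairwise non-\emph{proportionality} of the $f_{\mathcal T}$'s, not merely distinctness (this does follow from the linear independence, but must be said). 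The Warren/Milnor--Thom fallback also does not close the argument by itself, since within one sign-pattern cell the cut function still ranges over a polyhedron, not a single value; you additionally need that a $d$-dimensional affine subspace meets the vertex set of the parallelepiped spanned by the $\phi_{T_i}$'s in at most $2^d$ points (true, via the standard fact that a $d$-dimensional affine subspace of $\R^m$ contains at most $2^d$ points of $\{0,1\}^m$). All of this looks repairable and the final conclusion $s=2^{\Omega(k)}$ survives with smaller constants, but the paper's rank-plus-perturbation argument is substantially shorter and avoids every one of these issues.
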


The next theorem is for mimicking networks of planar graphs,
proving a lower bound on the number of \emph{edges}.
If the mimicking network is guaranteed to be sparse
(say planar, as is the case in our bound in Theorem \ref{UB planar graphs})
then we get a similar bound for the number of vertices.
But if the mimicking network could be arbitrary (e.g., a complete graph)
we do not know how to prove it cannot have $O(k)$ vertices.

\begin{theorem}\label{LB planar graph}
For every $k>5$ there exists a planar $k$-terminal network
such that every mimicking network of it has at least $\Omega(k^2)$ edges.
\end{theorem}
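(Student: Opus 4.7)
I would exhibit a planar $k$-terminal network $(G,c)$ whose min-cut vector cannot be reproduced by any network with fewer than $\Omega(k^2)$ edges. Place the $k$ terminals on the outer face in cyclic order $q_1,\ldots,q_k$ and fill the interior with a planar ``gadget''---such as a suitably weighted grid---on $\Theta(k^2)$ Steiner vertices and edges. I would tune the interior weights so that the $\binom{k}{2}$ ``arc'' bipartitions $S_{i,j}:=\{q_i,q_{i+1},\ldots,q_j\}$ (for $1\le i<j\le k$) have min-cut values that, as the interior weights range over an $\Omega(k^2)$-parameter family, sweep out a subset of $\mathbb{R}^{\binom{k}{2}}$ of dimension $\Omega(k^2)$.

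Given any mimicking network $(G',c')$ with $m:=|E(G')|$ edges, I would pick, for each arc $(i,j)$, a minimum $S_{i,j}$-separating cutset $C_{i,j}\subseteq E(G')$. The identities $\sum_{e\in C_{i,j}} c'(e)=\mincut_G(S_{i,j},\bar S_{i,j})$ then form a linear system $A\,c'=v$, where $A\in\{0,1\}^{\binom{k}{2}\times m}$ is the cutset-incidence matrix and $v$ is the arc-min-cut vector of $G$. For each fixed combinatorial type of $G'$ and fixed choice of minimum cutsets, the set of realizable vectors $v$ has dimension at most $m$; and the union over all such types (finitely many up to isomorphism, for each fixed $m$) and all such choices still has dimension at most $m$. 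Hence, for a suitably generic choice of interior weights in $G$, the vector $v$ lies outside every such variety of dimension $o(k^2)$, forcing $m=\Omega(k^2)$.

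The main obstacle is showing that the arc-min-cut image really has dimension $\Omega(k^2)$. A naive ``distinct values'' count is insufficient: a $k$-vertex path realizes $\binom{k}{2}$ distinct arc-cut values with only $k-1$ edges, and those values lie in a $(k-1)$-dimensional linear subspace. The delicate step is therefore designing the planar gadget so that its $\Omega(k^2)$ edge-weight parameters act on the arc-min-cut vector with genuine linear independence. I expect to verify this via planar duality---by identifying $\Omega(k^2)$ gadget edges, each of which is the uniquely dominant edge along some arc-min-cut's dual shortest path, so that perturbing a single weight moves exactly one coordinate of $v$ and the Jacobian of the arc-min-cut map has full column rank $\Omega(k^2)$ at the chosen parameter point.
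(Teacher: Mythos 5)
Your proposal follows essentially the same approach as the paper's actual proof: a planar grid-like gadget with $\Theta(k^2)$ weighted edges, a cutset-edge incidence matrix $A$ indexed by a quadratic family of terminal bipartitions, a rank lower bound $\rank(A)=\Omega(k^2)$, and a genericity/perturbation argument (the paper's Lemma~\ref{rank A is root size mimicking}) converting this rank into a lower bound on the number of edges of any mimicking network. The paper places $2k$ terminals on the top row and left column of a $k\times k$ grid and uses the $(k-1)^2$ ``staircase'' bipartitions $S_{i,j}=\{h_1,\ldots,h_i,v_1,\ldots,v_j\}$; your cyclic arc bipartitions are morally the same family once the boundary terminals are read in order along the top-left corner.

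One caution on the step you flag as delicate. You ask for a ``uniquely dominant edge'' per arc, i.e., a permutation-like submatrix of $A$, so that perturbing one weight moves exactly one coordinate of the min-cut vector. The paper does \emph{not} achieve this, and it is not needed. In the paper's grid, the distinguished horizontal edge for $S_{i,j}$ (the one of cost $1-\varepsilon_{i,j}$, with $\varepsilon_{i,j}=j/k^4$) participates in the minimum $S_{\alpha,j}$-separating cut for \emph{every} $\alpha\geq i$, not just $\alpha=i$. What the paper obtains instead, under a lexicographic ordering of the $(k-1)^2$ bipartitions $S_{i,j}$ and of the $\varepsilon$-edges, is a \emph{lower triangular} $(k-1)^2\times(k-1)^2$ submatrix (Lemma~\ref{lemma:rank(AG) planar}), which already has full rank. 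Insisting on a strictly diagonal structure would require $\Omega(k^2)$ pairwise ``private'' edges among only $O(k^2)$ gadget edges, which is tight and not obviously achievable; relaxing to triangularity, as the paper does, is both easier to engineer (via the monotone choice $\varepsilon_{i,j}=j/k^4$) and sufficient, so you should target that weaker structural goal when you try to carry out your plan.
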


\emph{Remark.}
Very recently, we were informed of new results, obtained independently of ours, 
by Khan, Raghavendra, Tetali and V{\'e}gh \cite{KRTVdraft}.
Their results include improved upper bounds for general graphs 
(albeit still doubly-exponential in $k$),
for trees, and for bounded treewidth graphs,
as well as lower bounds that are comparable to ours.

\paragraph{(c) Succinct data structures.}
Our final result is an alternative formulation of graph compression
as the problem of storing succinctly (i.e., summarizing or sketching)
all the $2^k$ terminal cuts in a $k$-terminal network.

\begin{definition}\label{dfn:TCscheme}
A \emph{terminal-cuts (TC) scheme} is a data structure
that uses storage (memory) $M$ to support the following two operations on
a $k$-terminal network $(G,c)$, where $n=|V(G)|$ and $c:E(G) \to \aset{1,\ldots,n^{O(1)}}$.
\begin{enumerate} \compactify
  \item Preprocessing $P$, which gets as input the network and builds $M$.
  \item Query $Q$, which gets as input a subset of terminals $S$,
and uses $M$ (without access to $(G,c)$) to output $\mincut_{G,c}(S,\bar{S})$.
\end{enumerate}
\end{definition}
Observe that putting together the two conditions above gives
$Q(S;P(G)) = \mincut_{G,c}(S,\bar{S})$ for all $S\subset Q$.
The \emph{storage requirement} (or \emph{space complexity})
of the TC scheme is the (maximum) number of machine words used by $M$.
Since the value of every cut in $(G,c)$ is at most $n^{O(1)}$, and since we need to be able to represent every vertex in $G$, we shall count the size of the TC scheme in terms of machine words of $O(\log n)$ bits.
An obvious upper bound is $2^k$ machine words,
by explictly storing a list of all the cut values.
Perhaps surprisingly, we can show a matching lower bound for any data structure
using the technology developed to prove Theorem \ref{LB general graphs}.
We prove the following Theorem \ref{thm:LBdataStructure},
including an extension of it to randomized schemes, in Section \ref{app:LB4DS}.

\begin{theorem}\label{thm:LBdataStructure}
For every $k>5$, a terminal-cuts scheme for $k$-terminal networks
requires storage of $2^{\Omega(k)}$ machine words.
\end{theorem}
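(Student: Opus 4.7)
The plan is to prove this via an information-theoretic counting argument. I would exhibit a family $\mathcal{F}$ of $k$-terminal networks $(G,c)$, bipartite with all terminals on one side (matching the setting of Theorem~\ref{LB general graphs}), with polynomially bounded integer edge costs, and satisfying two conditions: (i) $\log|\mathcal{F}| \ge 2^{\Omega(k)}\cdot \log n$, where $n=|V(G)|$; and (ii) every two distinct networks in $\mathcal{F}$ induce different terminal-cut vectors, i.e.\ the map $S \mapsto \mincut_{G,c}(S,\bar S)$ differs on some $S\subset Q$. Given such a family, any deterministic TC scheme with $M$ machine words (each $O(\log n)$ bits) satisfies $|\mathcal{F}| \le 2^{O(M\log n)}$, since networks with distinct cut vectors must map to distinct memory states under $P$ (otherwise the deterministic query $Q$ would return identical answers on both). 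Rearranging,
\[
M \;\ge\; \Omega\!\left(\frac{\log|\mathcal{F}|}{\log n}\right) \;=\; 2^{\Omega(k)}.
\]

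The family $\mathcal{F}$ should be read off the construction already used to prove Theorem~\ref{LB general graphs}. That proof exhibits a single bipartite $k$-terminal network whose non-terminal ``gadgets'' are attached to carefully chosen subsets of terminals and which admits no mimicking network of sub-exponential size. I expect this construction to naturally support a parametric family obtained by independently turning on or off each of the $2^{\Omega(k)}$ candidate gadgets (or varying their edge-weight labels). This yields a collection of $2^{2^{\Omega(k)}}$ bipartite networks, and the same combinatorial core that powers the mimicking-network lower bound should certify that each single toggle moves at least one prescribed terminal-cut value, delivering condition (ii) above.

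For the randomized extension, the key observation is that success amplification on the query side does not increase the storage $M$: replacing $Q$ by the majority vote of $O(k)$ independent invocations of $Q$ on the same $P(G)$ (with fresh randomness each time) drives the per-query error from $1/3$ down to $2^{-\Omega(k)}$. A union bound over the $2^k$ possible queries then yields that, for any fixed $G$, the amplified scheme reconstructs the entire cut vector of $G$ with probability at least $2/3$. Letting $G$ be uniformly distributed on $\mathcal{F}$ and invoking Fano's inequality on $G$ versus $P(G)$ gives $M\log n \ge (1-o(1))\log|\mathcal{F}|$, hence $M \ge 2^{\Omega(k)}$ again.

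The main obstacle I anticipate is verifying condition (ii) for the parametric family. The lower bound of Theorem~\ref{LB general graphs} only requires a single hard instance, so its proof does not automatically certify that distinct parameter choices inside the family yield distinct terminal-cut vectors. Establishing (ii) should follow from the same gadget analysis that drives Theorem~\ref{LB general graphs}, by showing that toggling any single gadget moves at least one $\mincut_{G,c}(S,\bar S)$; but this is where the delicate bookkeeping lives and must be written out explicitly.
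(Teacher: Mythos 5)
Your high-level plan coincides with the paper's: pick a family $\mathcal{F}$ of bipartite $k$-terminal networks (all built from the Theorem~\ref{LB general graphs} construction) with distinct terminal-cut vectors, use pigeonhole for deterministic schemes, and use amplification plus an entropy/Fano argument for randomized ones. The one place you flag as ``where the delicate bookkeeping lives'' --- certifying that distinct members of $\mathcal{F}$ yield distinct cut vectors --- is exactly where the paper does not need any new bookkeeping, and it is worth seeing why. The paper takes $\mathcal{F}$ to be the fixed graph $G$ with the fixed cost $c$ perturbed additively by $w\in\{0,\tfrac{1}{6k^2 l}\}^{E(G)}$ supported on $l$ edges $e_1,\dots,e_l$ whose columns in $A_{G,c}$ are linearly independent (possible since $\rank(A_{G,c})\ge l$ by Lemma~\ref{rank AG=m}). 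Lemma~\ref{min-cuts incidence matrix equality} guarantees that for every such $w$ the cutset-edge incidence matrix is unchanged, $A_{G,c+w}=A_{G,c}$, so the terminal-cut vector is the \emph{linear} image $A_{G,c}(\vec c+\vec w)$. Injectivity on $\mathcal{W}$ is then immediate from the linear independence of those $l$ columns; no per-gadget ``toggle'' analysis is required. In other words, the rank lower bound that already powers Theorem~\ref{LB general graphs} is precisely what delivers your condition~(ii), and Lemma~\ref{min-cuts incidence matrix equality} is the (short) lemma making the map linear. Also note the perturbations must be small enough not to disturb the minimizing cutsets, so ``toggling gadgets on or off'' (which would change the graph topology) is riskier than the cost-perturbation route the paper follows; your alternative phrasing ``varying their edge-weight labels'' is the right one.

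One smaller point on the randomized case: amplifying by reinvoking $Q$ with fresh randomness on the \emph{same} $P(G)$ is not clearly sound in the paper's model, since $P$ and $Q$ draw from a common randomness source; the success guarantee is over the joint draw, and rerunning $Q$ alone with new coins may leave you outside the guaranteed event. The paper instead stores $O(k)$ independent images (increasing storage by $O(k)$, which is harmless against a $2^{\Omega(k)}$ target) and takes a majority vote. After amplification, your union bound over $2^k$ queries and the entropy/Fano step match the paper's.
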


\subsection{Related Work}\label{sec:related}

Graph compression can be interpreted quite broadly,
and indeed it was studied extensively in the past,
with many results known for different graphical features
(the properties we wish to preserve).
For instance, in the context of preserving the graph distances,
concepts such as spanners \cite{PS89}
and probabilistic embedding into trees \cite{AKPW95,Bartal96},
have developed into a rich area with productive area,
and variations of it that involve a subset of terminal vertices
were studied more recently, see e.g. \cite{CE06,KZ12}.

In the context of preserving cuts (and flows), which is also our theme,
the problem of graph sparsification \cite{BK}
has recently seen an immense progress, see \cite{BSS08} and references therein.
Even closer to our own work are analogous questions that involve
a subset of terminals, and the goal is to find a small network that
preserves (the cost of) all minimum terminal cuts \emph{approximately}.
In particular, Chuzhoy \cite{Chuzhoy12} recently showed
a constant factor approximation using a network whose size depends
on (certain) edge-costs is in the original graph.
Another variation of our problem is that of a cut (and flow) sparsifier,
in which the compressed network should contain only $k$ vertices (the terminals)
and the goal is to minimize the approximation factor (called congestion),
see \cite{CLLM10,EGKRTT10,MM10} for the latest results.

\section{Upper Bound for Planar Graphs}
\label{sec:UB} %

In this section we prove Theorem \ref{UB planar graphs},
showing that every planar $k$-terminal network $(G,c)$ admits a mimicking network of size $O(k^2 2^{2k})$, which is in fact a minor of $G$.

\subsection{Technical Outline}\label{sec:UBtechniques}

Let $G$ be a planar $k$-terminal network, and assume it is connected.
Let $E_S=\mincut_{G,c}(S,\bar{S})$ be the cutset of a minimum $S$-separating cut in $(G,c)$, and let $\hat E$ be the union of $E_S$ over all subsets $S\subset Q$.
Removing the edges $\hat E$ from the graph $G$ disconnects it to
some number of connected components,
and we construct our mimicking network $G'$ by contracting
every such connected component into a single vertex.
It is easy to verify that these contractions maintain the minimum terminal cuts.
This method of constructing $G'$ resembles the one in \cite{HKNR98},
except that the sets of vertices that we unite are always connected,
hence our $G'$ is a minor of $G$.
We proceed to bound the number of connected components one gets in this way,
as this will clearly be the size of our mimicking network $G'$.

We first consider removing from $G$ a single cutset $\mincut_G(S,\bar S)$ (for arbitrary $S\subset Q$),
and show (in Lemma~\ref{bound number CCs in cutset})
that it can disconnect the graph into at most $k$ connected components.
We then extend this result to removing from $G$ two cutsets,
namely $\mincut_G(S,\bar S)$ and $\mincut_G(T, \bar T)$ (for arbitrary $S,T\subset Q$),
and show (in Lemma~\ref{bound number CCs in two cutsets})
such a removal can disconnect the graph into at most $3k$ connected components.
Next, we consider removing all the $m=2^{k-1}-1$ cutsets
of the minimum terminal cuts from $G$ (i.e., $\hat E$) from $G$.
However, naive counting of the number of resulting connected components,
which argues that every additional cutset splits each existing component
into at most $O(k)$ components,
would give us in total a poor bound of roughly $k^m$.

The crucial step here is to use the planarity of $G$
to improve the dependence on $m$ significantly,
and we indeed obtain a bound that is quadratic in $m$
by employing the dual graph of $G$ denoted $G^*$.
Loosely speaking, the cutsets in $G$ correspond to (multiple) cycles in $G^*$,
and thus we consider the dual edges of $\hat E$,
which may be viewed as a subgraph of $G^*$ comprising of (many) cycles.
We now use Euler's formula and the special structure of this subgraph of cycles;
more specifically, we count its vertices of degree $>2$, which turns out to
require the aforementioned bound of $3k$ for two sets of terminals $S,T$.
This gives us a bound on the number of faces in this subgraph
(in Lemma~\ref{bound number of faces in dual graph using meeting points}),
which in turn is exactly the number of connected component
in the primal graph (Corollary~\ref{cor:numCCs}).

\subsection{Preliminaries} \label{sec:UBprelims}

Recall that a graph is called a \emph{multi-graph} if we allow it to have parallel edges and loops.
A \emph{cycle} in a multi-graph $G$ is a sequence of edges $(u_0, v_0), \ldots, (u_{l-1}, v_{l-1})$ such that $v_i = u_{(i+1)\mod l}$ for all $i=0,\ldots,l-1$. The cycle is \emph{simple} if it contains $l$ distinct vertices and $l$ distinct edges. Note that two parallel edges define a simple cycle of length 2, and that a loop is a cycle of length 1 that contributes 2 to the degree of its vertex. A \emph{circuit} is a collection of cycles (not necessarily disjoint) $\mathcal{C}=\{C_1,\ldots,C_l\}$. Let $\mathcal{E(C)}=\bigcup_{i=1}^l C_i$ be the set of edges that participate in one or more cycles in the collection
(note it is not a multiset, so we discard multiplicities).
The cost of a circuit $\mathcal{C}$ is defined as $\sum_{e\in \mathcal{E(C)}}c(e)$.

For a graph $G$, let $CC(G)$ denote the set of connected components in the graph. In particular, if $CC(G)=\{P_1, \ldots, P_h\}$
then $V(G)=P_1\cup\cdots\cup P_h$ as a disjoint union.
For a subset of the vertices $W\subset V(G)$, let $\delta(W)$ denote the set of edges with exactly one endpoint in $W$, i.e. $\delta(W)=\{(u,v)\in E(G):\ u\in W, v\notin W\}$.
A vertex in $G$ with degree more than 2 will be called a \emph{meeting} vertex of $G$.
We introduce special notation for two (disjoint) sets of vertices:
\begin{align*}
  V_m(G) = \aset{v\in V:\ \deg(v)>2};
  \quad
  V_2(G) = \aset{v\in V:\ \deg(v)=2};
\end{align*}
and for two (disjoint) sets of edges:
\begin{align*}
  E_2(G) &:= \{(u,v)\in E(G):\ u,v\in V_2(G)\}; \\
  E_m(G) &:= \{(u,v)\in E(G):\ u\in V_m(G) \text{ or } v\in V_m(G) \}.
\end{align*}

\subsection{Proof of Theorem \ref{UB planar graphs}}
\label{sec:UBproof}

Let $(G,c)$ be a $k$-terminal network with terminals $Q=\{q_1, \ldots,q_k\}$, where $G$ is a connected plane graph with faces $F$ (if $G$ is not connected we can apply the proof for every connected component separately). We may assume, using small perturbation on the edges cost, that every two different subsets of edges in $G$ have different total cost. In the proof we will use the notations $E_S$ and $\hat E$ defined in Section \ref{sec:UBtechniques}.

\begin{lemma}[One cutset]
\label{bound number CCs in cutset}
For every subset of terminals $S$, the graph $G\setminus E_S$ has at most $k$ connected components.
\end{lemma}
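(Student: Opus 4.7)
The plan is to prove the stronger statement that every connected component of $G \setminus E_S$ contains at least one terminal, which then immediately yields the bound of $k$. I would start by letting $(W, V(G) \setminus W)$ be the specific $S$-separating bipartition whose cutset is $E_S$; by the perturbation assumption, this is the unique cut of minimum cost among all $S$-separating cuts.

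The first observation I would make is that each connected component of $G \setminus E_S$ lies entirely in $W$ or entirely in $V(G) \setminus W$. Indeed, if a component contained vertices from both sides, an internal path would traverse some edge crossing $(W, V(G) \setminus W)$, but every such edge belongs to $E_S = \delta(W)$ and has been deleted.

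The main step is a standard rerouting argument. Suppose, toward a contradiction, that some connected component $P$ of $G \setminus E_S$ contains no terminals; without loss of generality assume $P \subseteq W$. I would consider the bipartition $(W \setminus P, V(G) \setminus (W \setminus P))$, which is still $S$-separating because $P$ is terminal-free. To compare cutsets, note that $E_S \subseteq \delta(W)$ has no edge with both endpoints in $W$, so in particular no edge of $E_S$ joins $P$ to $W \setminus P$; combined with the fact that $P$ is a full connected component of $G \setminus E_S$, this forces that $P$ has no edges to $W \setminus P$ in $G$ at all. Consequently, the new cutset equals $E_S$ minus the (nonempty) set of edges between $P$ and $V(G) \setminus W$, and hence has strictly smaller total cost.

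I expect no serious obstacle here; the only case requiring a brief aside is to rule out that $P$ might have no edges to $V(G) \setminus W$ either. This would mean $P$ is isolated in all of $G$, which together with connectivity of $G$ forces $P = V(G)$; but then the original cut does not separate $S$ from $\bar S$, a contradiction since both are nonempty. Thus the modified cut is strictly cheaper than $E_S$, contradicting the minimality of $E_S$. Therefore every component of $G \setminus E_S$ carries at least one terminal, giving at most $k$ components.
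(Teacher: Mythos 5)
Your proof is correct and follows essentially the same approach as the paper: assume for contradiction a terminal-free component exists, and show that moving it across the cut strictly reduces the cost, contradicting the minimality of $E_S$. You spell out the details (the component lies entirely on one side, its edges to the rest of that side are empty, the modified bipartition is cheaper, and the degenerate case $P=V(G)$ is ruled out by connectedness of $G$) that the paper leaves implicit in its one-line proof, but the underlying argument is the same.
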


\begin{proof}
If there are more than $k$ connected components then there is at least one connected component without any terminal vertex. Since $G$ connected, we can unite it to any other connected component by removing some edge from $E_S$. We get a new cutset that separates $S$ from $\bar{S}$ with smaller total cost than $E_S$ in contradiction to the minimality.
\end{proof}

\begin{lemma}[Two cutsets]
\label{bound number CCs in two cutsets}
For every two subsets of terminals $S$ and $T$, the graph $G\setminus (E_S\cup E_T)$ has at most $|CC(G\setminus E_S)|+|CC(G\setminus E_T)|+k$ connected components.
\end{lemma}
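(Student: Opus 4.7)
The plan is to decompose $G\setminus(E_S\cup E_T)$ according to the $4$-way vertex partition induced by the two minimum cuts, and then bound the number of components in each piece via an uncrossing/minimality argument supplemented by a charging step.

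Let $W_S,W_T\subseteq V(G)$ be the terminal-containing sides of the chosen minimum cuts, so that $W_S\cap Q=S$, $W_T\cap Q=T$, $E_S=\delta(W_S)$, and $E_T=\delta(W_T)$; and define
\[
A=W_S\cap W_T,\quad B=W_S\setminus W_T,\quad C=W_T\setminus W_S,\quad D=V(G)\setminus(W_S\cup W_T).
\]
Every edge of $G$ not in $E_S\cup E_T$ has both endpoints in a single part (since an edge crossing between parts necessarily lies in $\delta(W_S)$ or $\delta(W_T)$), hence
\[
|CC(G\setminus(E_S\cup E_T))|=|CC(G[A])|+|CC(G[B])|+|CC(G[C])|+|CC(G[D])|.
\]

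To bound each summand, I would separate each $G[X]$ into its terminal-containing and terminal-less components. The terminal-containing components across all four parts number at most $k$, since there are only $k$ terminals.

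For a terminal-less component $U\subseteq A$ of $G[A]$, observe that the bipartition $(W_S\setminus U,\ V(G)\setminus(W_S\setminus U))$ is still $S$-separating because $U\cap Q=\emptyset$, and $U$ has no edges to $A\setminus U$ in $G$ (as $U$ is a component of the induced subgraph $G[A]$). Minimality of $E_S$ then gives
\[
e(U,B)\ \ge\ e(U,C)+e(U,D),
\]
and the symmetric argument applied to $W_T\setminus U$ yields $e(U,C)\ge e(U,B)+e(U,D)$. Analogous pairs of inequalities hold for terminal-less components of $G[B]$, $G[C]$, $G[D]$.

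The final step is a charging argument that injects the terminal-less components into $\{A_1,\dots,A_\alpha\}\cup\{B_1,\dots,B_\beta\}$, the components of $G\setminus E_S$ and $G\setminus E_T$ respectively. Intuitively, each terminal-less component $U$ has to be ``fed'' by edges going to neighboring parts, and these edges of $E_S\cup E_T$ link $U$ to particular $A_i$'s or $B_j$'s; the uncrossing inequalities above force this link to be sufficiently canonical to make the assignment injective. Summing then gives $|CC(G\setminus(E_S\cup E_T))|\le \alpha+\beta+k$.

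The main obstacle I anticipate is the injectivity of the charging step: a priori several terminal-less components could be associated with the same $A_i$ or $B_j$, and one must leverage the rather delicate uncrossing inequalities above to distinguish them and assign unique witnesses.
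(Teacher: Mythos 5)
The gap you flag at the end is real, and it is the entire content of the lemma; without the injectivity of the charging the argument proves nothing. But the difficulty is an artifact of the split you chose. The paper classifies each component $P_i$ of $G\setminus(E_S\cup E_T)$ not by whether it contains a terminal, but by how its boundary meets the two cutsets: writing $W_S(P_i)=\delta(P_i)\cap E_S$ and $W_T(P_i)=\delta(P_i)\cap E_T$, one of four things happens --- $W_S(P_i)=\emptyset$, or $W_T(P_i)=\emptyset$, or $W_S(P_i)=W_T(P_i)$, or the two are nonempty and distinct (``mixed''). In the first three cases $\delta(P_i)$ lies entirely inside $E_T$ (resp.\ $E_S$, resp.\ both), so $P_i$ is \emph{already} a connected component of $G\setminus E_T$ (resp.\ $G\setminus E_S$), and the injection into $CC(G\setminus E_S)\cup CC(G\setminus E_T)$ is the identity map; there is nothing to charge. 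Only the mixed components need a minimality argument, and the paper shows each mixed component must contain a terminal (if $P_0$ were terminal-less and mixed, replacing the more expensive of $W_S(P_0),W_T(P_0)$ by the cheaper one in the corresponding cutset yields a strictly cheaper $S$- or $T$-separating cut), so there are at most $k$ of them. This gives exactly $|CC(G\setminus E_S)|+|CC(G\setminus E_T)|+k$, with no nontrivial charging argument anywhere.

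Your uncrossing inequalities are the right kind of minimality statement, but as written with ``$\ge$'' they are too weak to close the argument, which is why you end up needing the unproved injectivity. The paper assumes (by a generic perturbation of $c$) that distinct edge subsets of $G$ have distinct total cost; under that assumption the cuts $\delta(W_S\setminus U)$ and $E_S$ are genuinely different edge sets, so your inequalities become strict. For a terminal-less $U\subseteq A$ you then have $c(e(U,B))>c(e(U,C))+c(e(U,D))$ and $c(e(U,C))>c(e(U,B))+c(e(U,D))$, which sum to $0>2\,c(e(U,D))$, a contradiction; the same happens for $U$ in $B$, $C$, or $D$. So under the perturbation there are no terminal-less components at all, and the charging problem you were worried about is vacuous --- but your write-up never reaches either this contradiction or the paper's cleaner route. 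What is missing is the single observation that a component with $\delta(P_i)\cap E_S=\emptyset$ is verbatim a component of $G\setminus E_T$; once you have that, the lemma falls out immediately.
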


We illustrate this lemma in Figure~\ref{fig:replace W_T in W_S}. %
The idea is that if $G\setminus (E_S\cup E_T)$ has too many connected components,
then we can find one that contains no terminals,
and that moving it to the other side of (say) $G\setminus E_S$
contradicts the minimality of $E_S$.%

\begin{figure}[b]
\centering
\includegraphics[angle=0,width=0.5\textwidth]{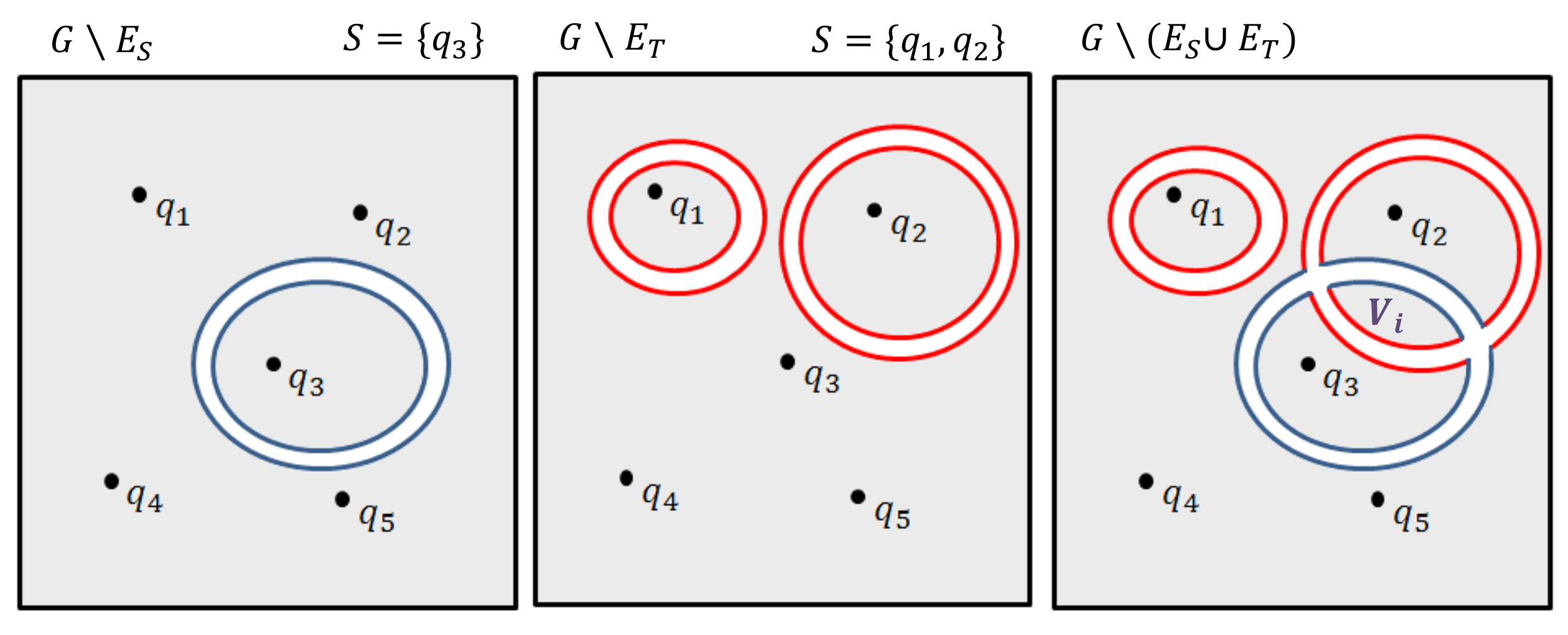}
\vspace{-.1in}
\caption{As depicted in gray, $G\setminus E_S$ has two connected components, $G\setminus E_T$ has three, and $G\setminus(E_S\cup E_T)$ has five. Notice the connected component $V_i$ of $G\setminus(E_S\cup E_T)$ contains no terminals.}
\label{fig:replace W_T in W_S}
\end{figure}

\begin{figure}[t]
\centering
\includegraphics[angle=0,width=0.3\textwidth]{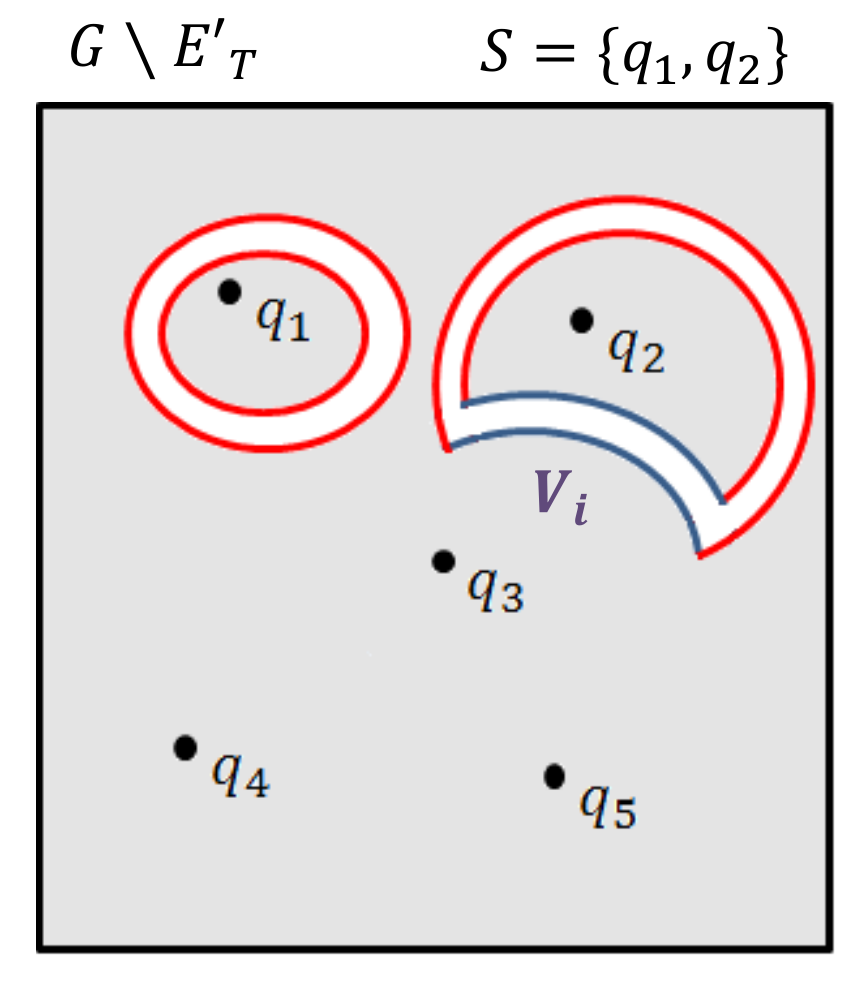}
\caption{$E'_T=(E_T\cup W_S)\setminus W_T$, where the red edges we removed are $W_T$, and the blue edges we add are $W_S$.}
\label{add empty CC to some other CC}
\end{figure}
\begin{proof}[Proof of Lemma~\ref{bound number CCs in two cutsets}]
Let $CC(G\setminus (E_S\cup E_T))=\{P_0,\ldots,P_h\}$.
For every $P_i$, we let $W_S(P_i):=\delta(P_i)\cap E_S$ be the set of edges
in $E_S$ that have exactly one of their endpoints in $P_i$,
and similarly $W_T(P_i):=\delta(P_i)\cap E_T$.
We can use the above notation to associate every connected components $P_i$
of $G\setminus (E_S\cup E_T)$, to one of the following four sets:
\begin{enumerate} \compactify
  \item $W_S(P_i)=\emptyset$; in particular, $P_i\in CC(G\setminus E_T)$.
  \item $W_T(P_i)=\emptyset$; in particular, $P_i\in CC(G\setminus E_S)$.
  \item $W_S(P_i)=W_T(P_i)$; in particular $P_i\in CC(G\setminus E_S)\cap CC(G\setminus E_T)$.
  \item $W_S(P_i)\neq \emptyset$, $W_T(P_i)\neq \emptyset$ and $W_S(P_i)\neq W_T(P_i)$; in particular $P_i\notin CC(G\setminus E_S)\cup CC(G\setminus E_T)$.
  \end{enumerate}
Every connected component that belongs to the last set (i.e. there are at least two different edges in $\delta(P_i)$, one from $E_T$ and one from $E_S$) will be called a \emph{mixed} connected component of $G\setminus (E_S\cup E_T)$.
Thus, the number of connected components in $G\setminus (E_S\cup E_T)$ is bounded by $|CC(G\setminus E_S)|+|CC(G\setminus E_S)|$ plus the number of mixed connected components of $G\setminus (E_S\cup E_T)$.

Assume towards contradiction that there are more than $k$ mixed connected components in $G\setminus (E_S\cup E_T)$. Therefore, there exists at least one mixed connected component, say with out loss of generality $P_0$, without any terminal in it. Since $P_0$ is a mixed connected component in $G\setminus(E_S \cup E_T)$ we know that $W_S(P_0)\neq \emptyset$, $W_T(P_0)\neq \emptyset$ and $W_S(P_0)\neq W_T(P_0)$. For simplicity from now on we will drop the $p_0$ and refer $W_S$ and $W_T$ to $W_S(P_0)$ and $W_T(P_0)$ correspondingly. By the perturbation on the edges cost the total cost of these two subsets must be different. Assume without loss of generality that $c(W_S) < c(W_T)$. We will replace the edges $W_T$ by the edges $W_S$ in the cutset of $T$ and call this new set of edges $E'_T$, i.e. $E'_T=(E_T\cup W_S)\setminus W_T$. It is clear that $c(E'_T) < c(E_T)$. We will prove that $E'_T$ is also a cutset that separate $T$ from $\bar{T}$ in the graph $G$, contradicting the definition of $E_T$. See Figures~\ref{fig:replace W_T in W_S} and~\ref{add empty CC to some other CC}.

Denote $CC(G\setminus E_T)=\{P'_0,\ldots,P'_{h'}\}$ and assume without loss of generality that the set of edges $W_S$ connects the connected component $P_0$ and the $t$ connected components $P_1,\ldots P_t$ of $G\setminus (E_S\cup E_T)$ into one connected component $P'_0$ in $G\setminus E_T$. Therefore, by adding the edges $E_S\setminus(W_S\cup W_T)$ to the graph $G\setminus (E_S\cup E_T)$ We will get the graph $G'=G\setminus (E_T\cup W_S)$ and its connected components will be $P_0,P_1,\ldots,P_t, P'_1,\ldots,P'_{h'}$. Since the graph $G'$ do not contains any edge from $E_T$, the sets $T$ and $\bar{T}$ are still separated.

Now it remain to add the edges $W_T$ to the graph $G'$ in order to get the desirable graph $G\setminus E'_T$. Assume without loss of generality that $P'_0$ contains terminals from $T$. Then, by the minimality of $E_T$, if edges from $W_T$ connect between $P'_0$ and $P'_i$, then the terminals of $P'_i$ are from $\bar{T}$. In particular, adding the edges $W_T$ to $G'$ will connect $P_0$ to some connected components $P'_i$ that contains only terminals from $\bar{T}$. Since $P_0$ does not contains any terminals, the connected component that was combined by the edges $W_T$ contains only terminals from $\bar{T}$, and so $E'_T$ separate between $T$ and $\bar{T}$.
\end{proof}

\paragraph{Planar duality.}

Recall that every planar graph $G$ has a dual graph $G^*$,
whose vertices correspond to the faces of $G$,
and whose faces correspond to the vertices of $G$,
i.e., $V(G^*)=\{v^*_f: f\in F(G)\}$ and $F(G^*)=\{f^*_v : v\in V(G)\}$.
Every edge $e=(v,u)\in E(G)$ with cost $c(e)$ that lies on the boundary of two faces $f_1,f_2\in F(G)$ has a dual edge $e^*=(v^*_{f_1},v^*_{f_2})\in E(G^*)$ with the same cost $c(e^*)=c(e)$ that lies on the boundary of the faces $f^*_v$ and $f^*_u$.
For every subset of edges $H\subset E(G)$, let $H^*:=\{e^*: e\in H\}$ denote the subset of the corresponding dual edges in $G^*$.

The following theorem describes the duality between two different kinds of edge sets -- minimum cuts and minimum circuits -- in a plane multi-graph.
It is a straightforward generalization of the case of $st$-cuts
(which are dual to cycles) to three or more terminals.
We are not aware of a reference for this precise statement,
although it is similar to \cite{HS85b,Rao87}.
See also Figure \ref{planar graph and its dual}
(in page \pageref{planar graph and its dual}) for illustration.

\begin{theorem}[Duality between cutsets and circuits]
\label{duality cuts and circuits}
Let $G$ be a connected plane multi-graph, let $G^*$ be its dual graph,
and fix a subset of the vertices $W\subseteq V(G)$.
Then, $H\subset E(G)$ is a cutset in $G$ that has minimum cost among those
separating $W$ from $V(G)\setminus W$
if and only if the dual set of edges $H^*\subseteq E(G^*)$
is actually $\mathcal E(\mathcal C)$ for a circuit $\mathcal C$ in $G^*$
that has minimum cost among those separating the corresponding faces
$\{f^*_v: v\in W\}$ from $\{f^*_v: v\in V(G) \setminus W\}$.
\end{theorem}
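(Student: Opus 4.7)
The strategy is to establish a cost-preserving bijection between cutsets $H=\delta(W')\subseteq E(G)$ in $G$ separating $W$ from $V\setminus W$ and circuits $\mathcal{C}$ in $G^*$ separating $\{f_v^*: v\in W\}$ from $\{f_v^*: v\in V\setminus W\}$, under the edge correspondence $H^*=\mathcal{E}(\mathcal{C})$. Since $c(e)=c(e^*)$, costs match ($c(H)=c(\mathcal{C})$), so minimizers of the two problems correspond directly.

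For the forward direction, suppose $H=\delta(W')$ is a cutset. The key parity observation is that every vertex $v^*_f$ of $G^*$ (corresponding to a face $f$ of $G$) has even degree in the subgraph $H^*\subseteq E(G^*)$: this degree equals the number of boundary edges of $f$ that lie in $H$, and as we traverse the boundary walk of $f$ cyclically in $G$, each crossing of $\delta(W')$ corresponds to a transition of vertex labels between $W'$ and $V\setminus W'$; since the walk is closed, the total number of such transitions is even. By the standard Eulerian-decomposition theorem, any edge set with all even degrees decomposes into edge-disjoint cycles, giving a circuit $\mathcal{C}$ in $G^*$ with $\mathcal{E}(\mathcal{C})=H^*$. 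Geometrically, drawing the cycles of $\mathcal{C}$ in the plane yields closed curves that bound exactly the plane region occupied by $W'$; since each face $f_v^*$ of $G^*$ is the plane region around vertex $v$ of $G$, it lies inside or outside these curves according to whether $v\in W'$, so $\mathcal{C}$ separates the two specified face sets in $G^*$.

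For the reverse direction, given a minimum-cost separating circuit $\mathcal{C}$ in $G^*$, set $H=\{e\in E(G): e^*\in\mathcal{E}(\mathcal{C})\}$. The closed curves traced by the cycles of $\mathcal{C}$ in the plane separate the face regions $\{f_v^*: v\in W\}$ from $\{f_v^*: v\in V\setminus W\}$ by the separating hypothesis. Since each primal vertex $v$ lies in its dual face $f_v^*$, the sets $W$ and $V\setminus W$ lie in different plane components of the complement of these closed curves; hence any path in $G\setminus H$ between them would have to cross a closed curve and thus use some edge of $H$, a contradiction. So $H$ is a cutset separating $W$ from $V\setminus W$ in $G$, and the cost equality together with the forward direction yields that minimum-cost cutsets and minimum-cost separating circuits correspond. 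The main obstacle is formalizing the plane-topology arguments (closed curves, plane components, inside/outside labels) rigorously in the multi-graph setting with loops and parallel edges; this is classical but requires care in the boundary-walk definition when a face is bounded by a vertex or edge multiple times, and ultimately follows from Whitney's cut/cycle-space duality for plane graphs.
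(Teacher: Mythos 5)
The paper does not actually prove this theorem; it is stated with the remark that it is a ``straightforward generalization'' of the classical duality between $st$-cuts and dual cycles (citing related work), so there is no in-paper argument to compare against. Your proof supplies exactly the two ingredients the generalization needs and is the standard route: (a) the parity observation that every dual vertex has even degree in $\delta(W)^*$ --- because a closed boundary walk of a face crosses $\delta(W)$ an even number of times --- giving an Eulerian decomposition of $\delta(W)^*$ into cycles, and (b) the topological claim that these cycles separate $\{f_v^*: v\in W\}$ from $\{f_v^*: v\notin W\}$, with the converse via the observation that a primal path from $W$ to $V\setminus W$ must use an edge whose dual lies on the circuit. Both are sound, and deferring the multi-graph/boundary-walk bookkeeping to Whitney's cut-space/cycle-space duality is an appropriate foundation. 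One step worth making explicit so that ``minimizers correspond directly'' is fully justified: the primal minimization is nearly degenerate (any edge set disconnecting $W$ from $V\setminus W$ must contain all of $\delta(W)$, so the minimizer is $\delta(W)$ itself), while on the dual side every circuit $\mathcal{C}$ separating the face sets must contain $e^*$ for each $e=(u,v)\in\delta(W)$, since $f_u^*$ and $f_v^*$ are adjacent across $e^*$ and must land in different regions; hence $\mathcal{E}(\mathcal{C})\supseteq\delta(W)^*$, which together with the paper's strictly positive edge costs gives both the minimality of $\delta(W)^*$ and the uniqueness the biconditional requires.
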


\begin{figure}[h]
\centering
\includegraphics[angle=0,width=0.7\textwidth]{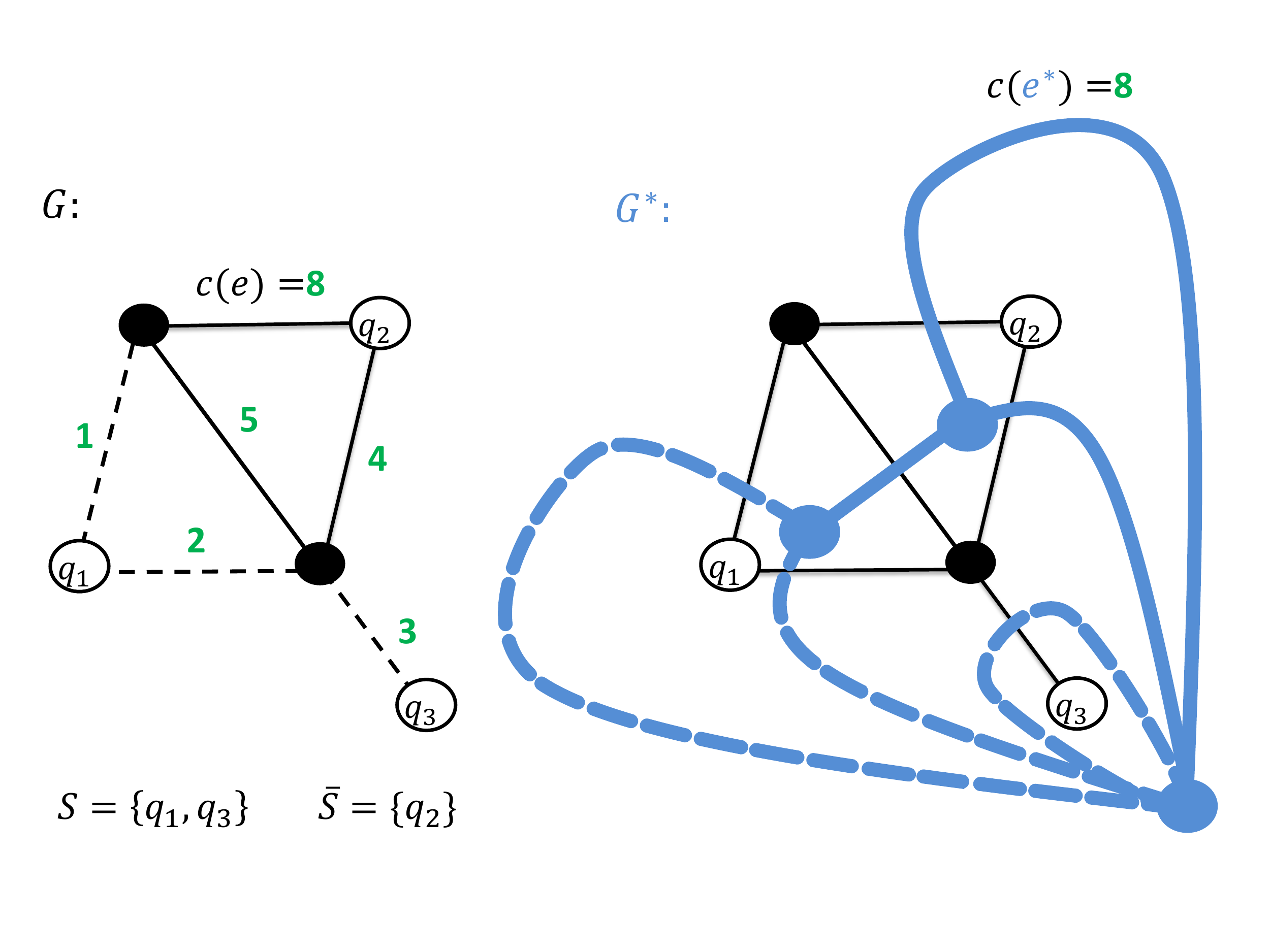}
\caption{A planar $3$-terminal network $G$ (in black),
with $E_S$ depicted as dashed edges.
The dual graph $G^*$ is shown in blue, with $E_S^*$ depicted as dashed edges.}
\label{planar graph and its dual}
\end{figure}

Recall that removing edges from a graph $G$ disconnects it into
(one or more) connected components.
The next lemma characterizes this behavior in terms
of the dual graph $G^*$.
Recall that $G[H]$ is a standard notation for the subgraph of $G$
induced by the subset (of edges or vertices) $H$.

\begin{lemma}[The dual of a connected component]
\label{connected components vs regions}
Let $G$ be a connected plane multi-graph, let $G^*$ be its dual,
and fix a subset of edges $H\subset E(G)$.
Then $P$ is a connected component in $G\setminus H$ if and only if
its dual set of faces $\{f^*_v: v\in P\}$ is a face of $G^*[H^*]$.
\end{lemma}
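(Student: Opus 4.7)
The plan is to argue topologically in the plane embedding, using the defining property that each primal edge $e$ crosses only its dual edge $e^*$, and each primal vertex $v$ lies in the interior of its dual face $f^*_v$. The key intermediate claim I would prove is: two primal vertices $u,v\in V(G)$ lie in the same face of $G^*[H^*]$ (viewing $u,v$ as points in the plane inside their respective dual faces) if and only if they lie in the same connected component of $G\setminus H$. Given this equivalence, the lemma follows at once, since a connected component $P$ of $G\setminus H$ corresponds to the unique face of $G^*[H^*]$ containing every point of every $f^*_v$ with $v\in P$, and conversely every face of $G^*[H^*]$ is the union of the dual faces $f^*_v$ it contains.

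For one direction (connected in $G\setminus H$ implies same face of $G^*[H^*]$), I would take a path $u=w_0,w_1,\ldots,w_\ell=v$ in $G\setminus H$ and form the curve obtained by concatenating the drawings of its edges. Each edge $(w_i,w_{i+1})\notin H$ crosses only its own dual, which by definition does not lie in $H^*$, so the entire concatenated curve is disjoint from the drawing of $H^*$. Hence $u$ and $v$ lie in the same connected region of the complement of the drawing of $H^*$, i.e., the same face of $G^*[H^*]$.

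For the other direction (same face of $G^*[H^*]$ implies connected in $G\setminus H$), I would take a curve $\gamma$ from $u$ to $v$ that stays inside one face of $G^*[H^*]$, and by a standard general-position perturbation assume that $\gamma$ crosses only finitely many edges of $G^*$, all transversely, and none of them in $H^*$. I would then record the sequence of dual faces $f^*_{w_0},\ldots,f^*_{w_\ell}$ traversed by $\gamma$, where $w_0=u$ and $w_\ell=v$. Each transition from $f^*_{w_i}$ to $f^*_{w_{i+1}}$ crosses a dual edge $e^*\in E(G^*)\setminus H^*$, whose primal counterpart is an edge $(w_i,w_{i+1})\in E(G)\setminus H$; this yields a walk, and hence a path, from $u$ to $v$ in $G\setminus H$.

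The main obstacle will be the topological formalism rather than combinatorial content: one must carefully justify the general-position perturbation of $\gamma$ and also handle the subtleties introduced by the multigraph structure, such as parallel edges, loops, and possibly disconnected subgraphs $G^*[H^*]$ whose "faces" are nontrivial annular regions. Since each primal edge still has a unique dual with the one-crossing property, however, neither direction of the argument is affected by these subtleties, so the equivalence — and hence the lemma — goes through.
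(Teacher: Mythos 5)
The paper states this lemma without proof, treating it as a standard consequence of plane duality, so there is no paper proof to compare your argument against. Your topological argument is correct and is the natural way to establish it: the forward direction uses that a primal path avoiding $H$ never touches the drawing of $H^*$ (since each primal edge meets only its own dual, and primal vertices lie interior to dual faces), and the reverse direction perturbs a curve lying in a face of $G^*[H^*]$ to general position and reads off a walk in $G\setminus H$ from the sequence of dual faces it traverses, using that dual faces are in bijection with primal vertices and that each dual edge outside $H^*$ separates two dual faces whose primal vertices are joined by an edge of $G\setminus H$. Your handling of the multi-graph subtleties is also right: loops produce trivial transitions that remain inside the same dual face, and parallel edges produce alternative transitions between the same pair of primal vertices, neither of which harms the argument. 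The only phrase worth tightening is ``every face of $G^*[H^*]$ is the union of the dual faces $f^*_v$ it contains'' --- such a face also contains the drawings of the dual edges in $E(G^*)\setminus H^*$ lying inside it, along with any isolated dual vertices --- but this imprecision has no bearing on the combinatorial bijection the lemma asserts.
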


\paragraph{Leveraging the planarity.}
We proceed with the proof of Theorem \ref{UB planar graphs},
and now use the duality of planar graphs.
In the following corollary we will deal with the dual graph $G^*[E_S^*\cup E_T^*]$ for arbitrary two subsets of terminals $S$ and $T$.

\begin{corollary}\label{meeting vertices in two circuits}
For all $S,T\subset Q$, the graph $G^*[E_S^*\cup E_T^*]$ has at most $6k$ meeting vertices.
\end{corollary}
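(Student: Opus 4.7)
The plan is to combine the two lemmas on connected components with planar duality and an Euler-type counting argument. First, by Lemma~\ref{bound number CCs in cutset} we have $|CC(G\setminus E_S)|, |CC(G\setminus E_T)| \le k$, so Lemma~\ref{bound number CCs in two cutsets} yields $|CC(G\setminus (E_S\cup E_T))| \le 3k$. Then by Lemma~\ref{connected components vs regions} applied to $H=E_S \cup E_T$, the connected components of $G\setminus (E_S\cup E_T)$ are in bijection with the faces of $G^*[E_S^*\cup E_T^*]$. Thus this dual subgraph has at most $3k$ faces (including the outer one).

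Next I would translate the bound on faces into a bound on meeting vertices via Euler's formula together with a minimum-degree argument. By Theorem~\ref{duality cuts and circuits}, the dual edge set $E_S^*$ is $\mathcal{E}(\mathcal{C}_S)$ for some circuit $\mathcal{C}_S$ in $G^*$, and likewise for $E_T^*$. Since every edge of a circuit lies on some cycle, each vertex that appears in $G^*[E_S^*]$ has degree $\ge 2$ there; the same holds for $E_T^*$. Consequently every vertex of $G^*[E_S^*\cup E_T^*]$ has degree at least $2$ in the union.

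Now set $V' := V(G^*[E_S^*\cup E_T^*])$, $E' := E(G^*[E_S^*\cup E_T^*])$, $F' := F(G^*[E_S^*\cup E_T^*])$, and let $m$ be the number of meeting vertices (degree $\ge 3$) of this subgraph. Splitting $|V'|$ into meeting vertices and degree-$2$ vertices, the handshake lemma gives
\[
2|E'| \;\ge\; 3m + 2(|V'|-m) \;=\; 2|V'| + m,
\]
so $|E'|-|V'| \ge m/2$. Euler's formula for a plane graph with $c\ge 1$ connected components states $|V'|-|E'|+|F'| = 1+c$, which rearranges to $|E'|-|V'| = |F'|-1-c \le |F'|-2$. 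Combining these two inequalities and using $|F'|\le 3k$, I get $m/2 \le 3k-2$, hence $m \le 6k-4 \le 6k$, which is exactly the claim.

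The only subtle point, and the part I would emphasize, is the minimum-degree argument: it is essential that $E_S^*$ and $E_T^*$ arise from \emph{circuits} (so every vertex that they touch already has degree $\ge 2$ in each piece separately), as otherwise a degree-$1$ vertex could ruin the degree inequality. Everything else is a routine application of Euler's formula to a possibly disconnected plane subgraph, so there is no serious obstacle beyond carefully invoking Theorem~\ref{duality cuts and circuits} and Lemma~\ref{connected components vs regions}.
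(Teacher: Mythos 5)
Your proof is correct and follows essentially the same route as the paper: bound the connected components of $G\setminus(E_S\cup E_T)$ by $3k$ via Lemmas~\ref{bound number CCs in cutset} and~\ref{bound number CCs in two cutsets}, pass to faces of the dual subgraph via Lemma~\ref{connected components vs regions}, use Theorem~\ref{duality cuts and circuits} to get minimum degree $2$, then combine handshaking with Euler's formula. Your bookkeeping (getting $m\le 6k-4$) is if anything slightly tighter than the paper's, which discards $|CC|+1$ outright.
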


\begin{proof}[Proof of Corollary ~\ref{meeting vertices in two circuits}]
According Lemmas \ref{bound number CCs in cutset} and \ref{bound number CCs in two cutsets}, the graph $G\setminus (E_S\cup E_T)$ has at most $|CC(G\setminus E_S)|+|CC(G\setminus E_T)|+k \leq 3k$ connected components. By Lemma \ref{connected components vs regions} every connected component in $G\setminus(E_S \cup E_T)$ corresponds to a face in $G^*[E_S^*\cup E_T^*]$. Therefore, $G^*[E_S^*\cup E_T^*]$ has at most $3k$ faces.

By the duality of cuts and circuits, every set of edges $E_S^*$ is a circuit. Therefore, every vertex $v$ appearing in these edges $E_S^*$, has degree at least 2. $E_S^*\cup E_T^*$ is circuit as well, and all its vertices have degree at least 2, i.e. $V(G^*[E_S^*\cup E_T^*])=V_2(G^*[E_S^*\cup E_T^*])\cup V_m(G^*[E_S^*\cup E_T^*])$. To simplify the notation we denote $G^*_{ST}=G^*[E_S^*\cup E_T^*]$. By Handshaking lemma,
\begin{align*}
 2|E(G^*_{ST})|= \sum_{v\in V(G^*_{ST})}\deg(v) %
 \geq 3|V_m(G^*_{ST})|+2|V_2(G^*_{ST})| %
 = 2|V(G^*_{ST})| + |V_m(G^*_{ST})|.
\end{align*}
By Euler's formula
\begin{align*}
3k \geq |F(G^*_{ST})|=|E(G^*_{ST})|-|V(G^*_{ST})|+|CC(G^*_{ST})|+1 \geq %
  \frac{1}{2} |V_m(G^*_{ST})|,
\end{align*}
and the corollary follows.
\end{proof}

Recall that in Section \ref{sec:UBtechniques} we defined
$\hat{E}:=\bigcup_{S\subset Q}E_S$,
and denote its set of dual edges by $\hat{E}^*:=\{e^*:\  e\in \hat{E}\}=\bigcup_{S\subset Q}E_S^*$.

\begin{lemma}\label{bound number of faces in dual graph using meeting points}
The graph $G^*[\hat{E}^*]$ has at most $O(k^2 2^{2k})$ faces.
\end{lemma}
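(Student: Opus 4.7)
The plan is to bound $|F(G^*[\hat{E}^*])|$ in two main steps: first bound the number of meeting vertices $|V_m(G^*[\hat{E}^*])|$ by applying Corollary~\ref{meeting vertices in two circuits} to all pairs of terminal subsets, and then convert this to a face bound via Euler's formula together with the fact that every vertex of $G^*[\hat{E}^*]$ has degree at least~$2$ (since every edge of $\hat{E}^*$ lies in some circuit $E_S^*$).

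For the first step, the key claim is that every meeting vertex $v$ of $G^*[\hat{E}^*]$ is already a meeting vertex of $G^*[E_S^* \cup E_T^*]$ for some pair of terminal subsets $S,T$ (allowing $S=T$). The natural argument picks three edges of $\hat{E}^*$ incident to $v$ and looks for a pair of cutsets covering at least three of them. The subtle case is when the three edges belong to three \emph{distinct} cutsets; to handle it I would exploit the parity fact that $\deg_{G^*[E_S^*]}(v^*_f)$ is always even for every vertex $v^*_f \in V(G^*)$, because walking around the boundary of the primal face $f$ crosses the $(S,\bar{S})$ bipartition an even number of times. This parity forces a second incident edge in one of the three cutsets, so some pair of circuits indeed contains at least three incident edges at $v$. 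Summing Corollary~\ref{meeting vertices in two circuits}'s bound of $6k$ per pair, together with an analogous $O(k)$ bound for single circuits (derived similarly from Lemma~\ref{bound number CCs in cutset}), over the $O(m^2)$ choices of pair with $m = 2^{k-1}-1$, yields $|V_m(G^*[\hat{E}^*])| = O(k \cdot 2^{2k})$.

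For the second step I would suppress the degree-$2$ vertices of $G^*[\hat{E}^*]$ to form a reduced plane multigraph $H$ with $V(H) = V_m$ and the same face count (modulo isolated-cycle connected components, each of which is a cycle contained in some $E_S^*$ and thus bounded by $mk$ in total). Euler's formula then gives $|F(G^*[\hat{E}^*])| = |E(H)| - |V_m| + |CC| + 1$, so the remaining task is to bound $|E(H)|$ and $|CC|$. The main obstacle will be bounding $|E(H)|$, because a plane multigraph on $|V_m|$ vertices can a priori have arbitrarily many parallel arcs. I would overcome this by a pair-based charging: first observe that along each arc of $H$ the underlying edges all lie in exactly the same cutsets (applying the same even-degree parity argument at each internal degree-$2$ vertex forces its two incident edges to lie in the same $E_S^*$'s), so each arc has a well-defined label $L \subseteq 2^Q$; then argue that each arc of $H$ is represented by an arc of the reduced multigraph of $G^*[E_S^* \cup E_T^*]$ for some pair $(S,T)$, whose edge count is $O(k)$ by Euler applied with $|V_m^{(S,T)}| \le 6k$ and $|F^{(S,T)}| \le 3k$. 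Combined with a bound $|CC| = O(km^2)$ (components containing meeting vertices, plus the isolated cycles), this gives $|F(G^*[\hat{E}^*])| = O(k^2 \cdot 2^{2k})$ as required.
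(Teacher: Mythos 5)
Your step 1 — showing that every meeting vertex of $G^*[\hat{E}^*]$ is already a meeting vertex of $G^*[E_S^*\cup E_T^*]$ for some pair $S,T$, then summing Corollary~\ref{meeting vertices in two circuits} over all $O(2^{2k})$ pairs — is correct and matches the paper's treatment of $V_m$. The parity observation ($E_S^*$ is an even subgraph of $G^*$) is true, but you do not need it here: since $E_S^*$ is a circuit, every vertex incident to an edge of $E_S^*$ already has degree at least $2$ in $G^*[E_S^*]$, which is exactly what makes your ``three distinct cutsets'' case go through.

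The problem is in step 2. The claim that ``each arc of $H$ is represented by an arc of the reduced multigraph of $G^*[E_S^*\cup E_T^*]$ for some pair $(S,T)$'' does not give you a usable bound, because the correspondence is not injective. Take an arc $a$ of $H$ with endpoints $u,v$ and label $L$. To make $a$ an \emph{arc} of the reduced multigraph of $G^*[E_S^*\cup E_T^*]$ you need both $u$ and $v$ to be meeting vertices of $G^*[E_S^*\cup E_T^*]$, and there may be no single pair $(S,T)$ achieving this: $u$ might acquire a third incident edge only from some $E_{T_u}^*\not\supseteq a$ and $v$ only from a different $E_{T_v}^*$. In that case $a$ is a proper sub-path of an arc of the reduced multigraph of $G^*[E_S^*\cup E_{T_u}^*]$, and several arcs of $H$ can collapse onto the same arc of a pair's reduced multigraph, so summing $O(k)$ over pairs does not bound $|E(H)|$. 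The paper sidesteps $H$ entirely: it never suppresses degree-$2$ vertices, but instead writes Euler's formula for $G^*[\hat{E}^*]$ and cancels $|E_2|$ against $|V_2|$ using $|E_2|\le|V_2|$ (an immediate handshake count). What remains is to bound $|E_m(G^*[\hat{E}^*])|$, which the paper does by a degree argument rather than a charging argument: since $G\setminus E_S$ has at most $k$ components, the circuit $E_S^*$ decomposes into at most $k$ cycles, so every vertex of $G^*[E_S^*]$ has degree at most $2k$ there; hence $|E(G^*[E_S^*])\cap E_m|\le 2k\cdot|V(G^*[E_S^*])\cap V_m|=O(k^2 2^k)$, and summing over the $2^{k-1}-1$ choices of $S$ gives $|E_m|=O(k^2 2^{2k})$. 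This degree bound is the missing ingredient in your write-up; without it (or a repair of the injectivity of your charging), the bound on the edge count of $H$ is not established.

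A smaller point: the paper proves $|CC(G^*[\hat{E}^*])|\le k$ directly (a component with no terminal face would let you drop an edge from some minimum circuit), which is much tighter and cleaner than your $O(km^2)$ estimate, though for the final $O(k^2 2^{2k})$ bound your looser estimate would still suffice if the edge count were correct.
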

\begin{proof}[Proof of Lemma~\ref{bound number of faces in dual graph using meeting points}]
Using Theorem \ref{duality cuts and circuits} we get that for every $S\subset Q$, $E_S$ is a minimum cutset in $G$ if and only if $E^*_S$ (the dual set of edges of $E_S$) is a minimum circuit in $G^*$. Moreover, as defined in Section \ref{sec:UBproof} $\hat E ^*=\bigcup_{S\subset Q}E_S^*$. Thus, $\hat{E}^*$ is also a circuit, and so
\begin{align}
|V(G^*[\hat{E}^*])| = |V_2(G^*[\hat{E}^*])| + |V_m(G^*[\hat{E}^*])|, \label{eq1} \\
|E(G^*[\hat{E}^*])| = |E_2(G^*[\hat{E}^*])| + |E_m(G^*[\hat{E}^*])|. \label{eq2}
\end{align}
According the definitions and the Handshaking lemma we get that

\begin{equation}\label{inequality between E_2 to V_2}
|E_2(G^*[\hat{E}^*])|\leq|V_2(G^*[\hat{E}^*)|.
\end{equation}
By a union bound, the two following inequalities holds
\begin{align}
\textstyle
|V_m(G^*[\hat{E}^*])| \leq \sum_{S\subset Q}|V(G^*[E_S^*]) \cap V_m(G^*[\hat{E}^*])| \label{eq3},\\
\textstyle
|E_m(G^*[\hat{E}^*])| \leq \sum_{S\subset Q}|E(G^*[E_S^*]) \cap E_m(G^*[\hat{E}^*])|. \label{eq4}
\end{align}

Fix a subset $S$. We will start by bounding the set of vertices $V(G^*[E_S^*])\cap V_m(G^*[\hat{E}^*])$. Fore every vertex $v$ in $V(G^*[E_S^*])\cap V_m(G^*[\hat{E}^*])$ there exists a subset $T$ such that $v$ is also in $V(G^*[E_S^*])\cap V_m(G^*[E_S^*\cup E_T^*])$. According to Corollary \ref{meeting vertices in two circuits}, $|V_m(G^*[E_S^*\cup E_T^*])| \leq 6k$. Therefore $|V(G^*[E_S^*])\cap V_m(G^*[E_S^*\cup E_T^*])|\leq 6k$, and by union bound on all the subsets $T$ we get $|V(G^*[E_S^*]) \cap V_m(G^*[\hat{E}^*])| \leq 6k2^k $.

We will now move to bound $E(G^*[E_S^*]) \cap E_m(G^*[\hat{E}^*])$. By Lemma \ref{bound number CCs in cutset} there are at most $k$ cycles that cover the graph $G^*[E_S^*]$, so every vertex in $V(G^*[E_S^*])\cap V_m(G^*[\hat{E}^*])$ can be shared by at most $k$ cycles of $G^*[E_S^*]$, which bound the degree of every vertex in $G^*[E_S^*]$ by $2k$. Thus

\begin{equation}
|E(G^*[E_S^*]) \cap E_m(G^*[\hat{E}^*])|\leq 2k |V(G^*[E_S^*]) \cap V_m(G^*[\hat{E}^*])| =O(k^2 2^k) \label{eq5}
\end{equation}

We can bound $|CC(G^*[\hat{E}^*])|$ by extending the argument in Lemma \ref{bound number CCs in cutset}. Assume toward contradiction that $|CC(G^*[\hat{E}^*])|\geq k+1$. Thus, there exists at least one connected component $P$ in $G^*[\hat{E}^*]$ that does not contains any terminal face of $G^*$. By the construction of $\hat{E}^*$, there exists a subset $S$ such that $P$ contains at least one cycle $C$ of the circuit $E^*_S$. Since $P$ does not contain any terminal face, we can remove some edge $e^*$ of the cycle $C$ from the circuit $E_S^*$ and get circuit with smaller cost that separates between $f^*_S$ and $f^*_{\bar{S}}$ in contradiction.

Now by Euler's formula,
\begin{align*}
|F(G^*[\hat{E}^*])| = & |E(G^*[\hat{E}^*])| -|V(G^*[\hat{E}^*])|+1+|CC(G^*[\hat{E}^*])|\\
\leq & |E_m(G^*[\hat{E}^*])|-|V_m(G^*[\hat{E}^*])|+1+k & \mbox{by Eqns.~\eqref{eq1},\eqref{eq2},\eqref{inequality between E_2 to V_2}}\\
\leq & \textstyle
\sum_{S\subset Q}O(k^2 2^k) +1+k= O(k^2 2^{2k}), &\mbox{by Eqns.~\eqref{eq4},\eqref{eq5}}
\end{align*}
and the lemma follows.
\end{proof}

\begin{corollary}\label{cor:numCCs}
There are at most $ O(k^2 2^{2k})$ connected components in the graph $G\setminus \hat{E}$.
\end{corollary}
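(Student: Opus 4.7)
The plan is to derive this corollary as an essentially immediate consequence of the two preceding results: Lemma~\ref{connected components vs regions}, the primal/dual dictionary for connected components, combined with the face-count bound of Lemma~\ref{bound number of faces in dual graph using meeting points}. First I would apply Lemma~\ref{connected components vs regions} with the choice $H := \hat{E}$, so that $H^* = \hat{E}^*$. The ``if and only if'' in that lemma sets up a bijective correspondence between connected components $P$ of $G \setminus \hat{E}$ and faces of $G^*[\hat{E}^*]$, with $P$ mapped to the face $\{f^*_v : v \in P\}$. Consequently
$$
 |CC(G \setminus \hat{E})| \;=\; |F(G^*[\hat{E}^*])|.
$$

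The second step is then to plug in Lemma~\ref{bound number of faces in dual graph using meeting points}, which already provides $|F(G^*[\hat{E}^*])| \le O(k^2 2^{2k})$. Combined with the equality above, this gives the claimed bound.

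The only hypothesis that needs to be checked before applying the dictionary is that $G$ is a connected plane multi-graph; this is the standing assumption at the start of Section~\ref{sec:UBproof} (and if $G$ is disconnected, the very same argument applied to each connected component separately yields the same asymptotic bound, since the $O(k^2 2^{2k})$ contributions from different components sum to the same order). There is no real obstacle here: once Lemmas~\ref{connected components vs regions} and~\ref{bound number of faces in dual graph using meeting points} are in hand, this corollary is merely their juxtaposition, translating the face bound on the dual side back to a component bound on the primal side.
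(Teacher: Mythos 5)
Your proposal is correct and matches the paper's own (very brief) argument exactly: both apply Lemma~\ref{connected components vs regions} with $H=\hat{E}$ to equate the number of connected components of $G\setminus\hat{E}$ with the number of faces of $G^*[\hat{E}^*]$, and then invoke the face bound from Lemma~\ref{bound number of faces in dual graph using meeting points}. No discrepancies to report.
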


This corollary follows from
Lemma~\ref{bound number of faces in dual graph using meeting points}
by applying Lemma \ref{connected components vs regions} with $H=\hat{E}$.
We now complete the proof of Theorem \ref{UB planar graphs}.
Merge the vertices in each connected component of $G\setminus \hat{E}$ into a single vertex (formally, contract all the internal edges in each connected component) and call this new multi-graph $M$.
Notice there is at most one terminal vertex in each connected component. So a vertex in $M$, which corresponds to a connected component (of $G\setminus\hat E$) that contains some terminal vertex $q$, will be identified with that terminal $q$.
To be concrete, the vertices and the terminals of $M$ are the sets
$$V(M):=\{v_i:\ P_i\in CC(G\setminus \hat{E})\}$$
$$Q(M):=\{q=v_i:\  P_i\in CC(G\setminus \hat{E}) \text{ and } q\in P_i\}$$

In addition, $(v_i,v_j)$ is an edge in $M$ if there exist two vertices $u_i,u_j\in E(G)$ such that $u_i\in P_i$, $u_j\in P_j$ and $(u_i,u_j)$ is an edge in $G$. The cost of every edge $(v_i,v_j)\in E(M)$ is
$$c'(v_i,v_j):=\sum_{u_i\in P_i, u_j\in P_j:\  (u_i,u_j)\in E(G)}c(u_i,u_j).$$

It is easy to verify that $M$ is a minor of $G$ with $ O(k^2 2^{2k})$ vertices that includes the same $k$ terminals $Q$. We now prove that $(M,c')$ is a mimicking network of $G$ using the same argument as in \cite{HKNR98}, but applied to the connected components.
Fix a subset of terminals $S$. Since we only contract edges, every cut that separates between $S$ and $\bar{S}$ in $M$ has a cut in $G$ that separates between $S$ and $\bar{S}$ with the same cost, thus $\mincut_{M,c'}(S,\bar{S}) \geq \mincut_{G,c}(S,\bar{S})$. In the other direction, notice that by the construction of $M$, all the vertices in each connected components of $G\setminus \hat{E}$ are on the same side of the minimum $S$-separating cut in $G$. Thus, there is a cut in $M$ that separates between $S$ and $\bar{S}$ and has cost $\mincut_{G}(S,\bar{S})$.
Combining these together, we get the equality $\mincut_{M,c'}(S,\bar{S}) = \mincut_{G,c}(S,\bar{S})$ for every $S$,
and Theorem \ref{UB planar graphs} follows.

\section{Lower Bounds}
\label{sec:LB}

In this section we prove Theorems \ref{LB general graphs} as well as Theorem \ref{LB planar graph}.

\subsection{Techniques and Proof Outline}\label{sec:LBtechniques}

All our lower bounds are proved using the same technique,
which basically counts the number of ``degrees of freedom'' needed
to express all the relevant cut values.
Formally, we develop a certain machinery based on linear algebra,
which relates the size of any mimicking network to the rank of some matrix.

The lower bound proofs start by describing a $k$-terminal network $(G,c)$
that seems minimal %
in the sense that it does not admit a smaller mimicking network.
The networks used in Theorems \ref{LB general graphs} and \ref{LB planar graph}
are different, see Section \ref{sec:LB} for details.
We then identify the minimum cost $S$-separating cuts for all (or some)
$S\subset Q$, and capture this information in a matrix.

\begin{definition}[Incidence matrix between cutsets and edges]
\label{def: cutset-edge matrix}
Let $(G,c)$ be a $k$-terminal network,
and fix an enumeration $S_1,\ldots,S_m$ of all $m=2^{k-1}-1$
distinct and nontrivial bipartitions $Q=S_i\cup \bar S_i$.
The \emph{cutset-edge incidence matrix} of $(G,c)$ is the matrix
$A_{G,c}\in \B^{m\times E(G)}$ given by
$$(A_{G,c})_{i,e}=
  \begin{cases}
    1 & \text{if $e\in \mincut_(G,c)(S_i,\bar S_i)$;} \\
    0 & \text{otherwise.}
  \end{cases}
$$
\end{definition}

We also define the vector of minimum-cut values between every
bipartition of terminals
$$
  {\tcvector_{G,c}}
  =\left( \begin{array}{c}
      \mincut_{G,c}(S_1,\bar{S_1}) \\
      \vdots \\
      \mincut_{G,c}(S_m,\bar{S_m}) \\
    \end{array}
    \right)
    \in \R^m.
$$
Here and throughout,
we shall omit the subscript $c$ when it is clear from the context.
Observe that if we think of the edge costs $c$ as a column vector $\vec{c}\in (\R^+)^{E(G)}$,
then $A_G\cdot \vec{c}={\tcvector_G}$.
For a given $S\subset Q$,
a minimum $S$-separating cut $(W,V(G)\setminus W)$ is called \emph{unique}
if all other $S$-separating cuts have a strictly larger cost.

The core of our analysis is the next lemma,
as it immediately provides a lower bound on the size of any mimicking networks;
the theorems would follow by calculating the rank of $A_G$.

\begin{lemma}[Main Technical Lemma]
\label{rank A is root size mimicking}
Let $(G,c)$ be a $k$-terminal network.
Let $A_G$ be its cutset-edge incidence matrix, and assume that
for all $S\subset Q$ the minimum $S$-separating cut of $G$ is unique.
Then there is for $G$ an edge-costs function $\hat{c}: E(G) \to \R^+$,
under which every mimicking network $(G',c')$ satisfies
$|E(G')| \geq \rank(A_{G,{c}})$.
\end{lemma}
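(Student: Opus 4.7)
The plan is to choose $\hat{c}$ so that the cut-value vector $\tcvector_{G,\hat{c}}$ is a ``generic'' point of the $r$-dimensional column space of $A_{G,c}$, where $r := \rank(A_{G,c})$, and then to rule out any mimicking network $(G',c')$ with $|E(G')|<r$ by a dimension-counting argument. The uniqueness hypothesis is precisely what makes the cut-value vector behave linearly in the edge costs under small perturbations, so that the ``generic'' targets fill out a full $r$-dimensional subspace.

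First, I would establish local linearity. Since the cost of any fixed cutset depends linearly (hence continuously) on the edge costs, and each $\mincut_{G,c}(S,\bar{S})$ is attained by a unique cutset, there exists an open neighborhood $U \subseteq (\R^+)^{E(G)}$ of $c$ such that for every $c'' \in U$ and every $S\subset Q$ the minimum $S$-separating cut of $(G,c'')$ is the same edge subset as in $(G,c)$. Consequently $A_{G,c''}=A_{G,c}$ throughout $U$, and $\tcvector_{G,c''}=A_{G,c}\cdot c''$ is linear in $c''$ on $U$.

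Next, I would bound the dimension of cut-value vectors achievable by ``too small'' mimicking networks. For a fixed $k$-terminal multi-graph $G'$ with $|E(G')|<r$ and any assignment $F=(F_{S_1},\ldots,F_{S_m})$ of cutsets $F_{S_i}\subseteq E(G')$, let $B_{G',F}\in \B^{m\times E(G')}$ be the associated incidence matrix and $P_{G',F}\subseteq (\R^+)^{E(G')}$ the (possibly empty) set of cost functions $c'$ under which each $F_{S_i}$ really is the minimum $S_i$-separating cut in $(G',c')$. On $P_{G',F}$ we have $\tcvector_{G',c'}=B_{G',F}\cdot c'$, so the image is contained in a subspace of $\R^m$ of dimension at most $|E(G')|<r$. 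Since any $G'$ with $|E(G')|<r$ has at most $2(r-1)+k$ non-isolated vertices (removing isolated non-terminals does not alter cut values), up to isomorphism fixing the terminal labels there are only finitely many candidate multi-graphs $G'$, and for each, only finitely many assignments $F$. Hence
\[
\mathcal{N} := \bigcup_{(G',F)} B_{G',F}\cdot P_{G',F}
\]
is a finite union of subsets of $\R^m$, each of dimension less than $r$.

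Finally, I would make a generic choice and conclude. The linear map $c''\mapsto A_{G,c}\cdot c''$ restricted to $U$ is open onto a neighborhood of $A_{G,c}\cdot c$ inside the $r$-dimensional column space of $A_{G,c}$; since $\mathcal{N}$ meets this column space in a set of Lebesgue measure zero (relative to the column space), its preimage inside $U$ has measure zero in $\R^{E(G)}$, so I can pick $\hat{c}\in U$ with $\tcvector_{G,\hat{c}}\notin\mathcal{N}$. By the first step, $\hat{c}$ also keeps the minimum $S$-separating cuts unique and $A_{G,\hat{c}}=A_{G,c}$. If $(G',c')$ is any mimicking network of $(G,\hat{c})$ with $|E(G')|<r$, then $c'\in P_{G',F}$ for some $F$, so $\tcvector_{G',c'}\in\mathcal{N}$, contradicting $\tcvector_{G',c'}=\tcvector_{G,\hat{c}}\notin\mathcal{N}$. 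The main obstacle is that one has to handle all candidate mimicking networks $G'$ simultaneously; the edge-count bound $|E(G')|<r$ is what forces a uniform vertex-count bound, turning ``all such $G'$'' into a finite set and keeping $\mathcal{N}$ negligible.
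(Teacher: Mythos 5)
Your proposal is correct and follows essentially the same approach as the paper: both perturb $c$ within a neighborhood small enough to leave all minimum cutsets (hence $A_G$) unchanged, observe that any candidate mimicking network with fewer than $r=\rank(A_{G,c})$ edges forces the cut-value vector into a finite union of subspaces of dimension at most $r-1$, and then conclude by a genericity argument (you phrase it via Lebesgue measure, the paper via a uniformly random perturbation and a probability-zero computation). The only cosmetic difference is your choice to index the finite union by pairs $(G',F)$ of small multi-graphs and cutset assignments, whereas the paper indexes directly by the $2^{m(r-1)}$ Boolean incidence matrices in $\B^{m\times(r-1)}$, sidestepping the need to bound the number of non-isolated vertices.
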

Notice that the bound is proved not for $(G,c)$ but rather for $(G,\hat c)$;
indeed, the edge-costs $\hat c$ are a small random perturbation of $c$.
Thus, the proof of this lemma first shows that a small perturbation
does not change the cutset-edge incidence matrix, i.e. $A_{G,c}=A_{G,\hat c}$.
This is where the uniqueness property is used.
Next, fix a small graph $G'$ that can potentially be a mimicking network,
but without specifying its edge-costs $c'$;
now let $\mathcal E_{G'}$ be the event that $(G,\hat c)$
admits a mimicking network of the form $(G',c')$.
Since $G'$ has too few edges (whose costs are undetermined/free variables),
we can use linear algebra to show that $\Pr[\mathcal E_{G'}]=0$.
The lemma then follows by a union bound over the finitely many (unweighted)
graphs $G'$ of the appropriate size.

\subsection{Proof of Lemma \ref{rank A is root size mimicking}}%
\label{proof rank A is root size mimicking}

We turn to proving Lemma \ref{rank A is root size mimicking}.
Recall that this lemma considers
a $k$-terminal network $(G,c)$,
and assuming a certain (uniqueness) condition,
asserts that there is for $G$ a modified edge-costs function $\hat c$,
under which every mimicking network must have at least $\rank(A_{G,c})$ edges,
where $A_{G,c}$ is a cutset-edge incidence matrix of $(G,c)$.

The proof employs two lemmas and the following notation.
For $S\subset Q$, let $\Delta_{G,c}(S)\ge0$ be the difference between the two
smallest costs among all $S$-separating cuts in $G$.
Observe that if these two are not equal (i.e., $\Delta_{G,c}(S)> 0$)
then the minimum $S$-separating cut is said to be {unique} in $G$.
We also denote $\Delta_{G,c}:=\min_{S\subset Q}\Delta_{G,c}(S)$.

\begin{lemma}\label{min-cuts incidence matrix equality}
For every edge-costs function $w: E(G) \to [0,\frac{1}{\Delta_{G,c}|E(G)|}]$ the cutset-edge incidence matrix of $(G,c)$ is equal to the cutset-edge incidence matrix of $(G,c+w)$, i.e. $A_{G,c}=A_{G,c+w}$, where $c+w: e \to c(e)+w(e)$.
\end{lemma}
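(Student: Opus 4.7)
The plan is to show that under the additive perturbation $w$, the minimum $S$-separating cutset of $(G,c)$ remains the unique minimum $S$-separating cutset of $(G,c+w)$ for every $S\subset Q$; by Definition~\ref{def: cutset-edge matrix} this translates directly, row by row, into $A_{G,c}=A_{G,c+w}$.

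First I would fix an arbitrary $S\subset Q$ and let $F^*:=\mincut_{G,c}(S,\bar S)$ denote the (unique) optimal $S$-separating cutset under $c$. By the definitions of $\Delta_{G,c}(S)$ and $\Delta_{G,c}$, every other $S$-separating cutset $F$ satisfies the gap estimate
$$c(F)-c(F^*)\ \ge\ \Delta_{G,c}(S)\ \ge\ \Delta_{G,c}.$$
Next I would control the additive shift produced by $w$: since $w$ is nonnegative and each edge cost shifts by at most $\tfrac{1}{\Delta_{G,c}|E(G)|}$, summing over the at most $|E(G)|$ edges of any cutset $H\subseteq E(G)$ gives $0\le w(H)\le \tfrac{1}{\Delta_{G,c}}$. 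After a harmless preliminary scaling of $c$ that forces $\Delta_{G,c}>1$ (such scaling preserves the identity of the minimum cutsets), this yields $w(H)<\Delta_{G,c}$ for every cutset $H$.

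Combining these two bounds, for any $S$-separating cutset $F\neq F^*$ I would write
$$
(c+w)(F)-(c+w)(F^*) \;=\; \bigl(c(F)-c(F^*)\bigr) + \bigl(w(F)-w(F^*)\bigr) \;\ge\; \Delta_{G,c} - w(F^*) \;>\; 0,
$$
so $F^*$ is still strictly cheaper than every competitor under $c+w$, and hence remains the unique minimum $S$-separating cutset. Since the argument works for each $S_i$ in the enumeration used to define the cutset-edge incidence matrix, the row of $A_{G,c+w}$ indexed by $S_i$ equals the corresponding row of $A_{G,c}$, and the matrices coincide.

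I do not expect a substantive obstacle; the only delicate point is ensuring that the perturbation magnitude $\tfrac{1}{\Delta_{G,c}|E(G)|}$ is genuinely ``small compared to the gap $\Delta_{G,c}$,'' which is handled by the preliminary normalization of $c$. Beyond that, the proof is a direct unpacking of the uniqueness hypothesis together with the definition of the cutset-edge incidence matrix.
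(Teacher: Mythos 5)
Your approach is substantively the same as the paper's: argue that the perturbation keeps the (unique) $c$-optimal $S$-separating cutset strictly cheapest under $c+w$ for each $S$, then read off equality of the incidence matrices row by row. The inequality chain at the end is correct \emph{given} $w(F^*)<\Delta_{G,c}$. But the scaling step does not deliver this for free, because the hypothesis $w(e)\leq\frac{1}{\Delta_{G,c}|E(G)|}$ itself depends on $c$. Rescaling $c\mapsto\lambda c$ with $\lambda>1$ (to force $\Delta_{G,\lambda c}=\lambda\Delta_{G,c}>1$) shrinks the admissible perturbation range for $\lambda c$ to $[0,\frac{1}{\lambda\Delta_{G,c}|E(G)|}]$, whereas the original statement about $(G,c)$ corresponds, after scaling both $c$ and $w$, to a statement about perturbations of $\lambda c$ lying in the strictly larger range $[0,\frac{\lambda}{\Delta_{G,c}|E(G)|}]$. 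So proving the (easier) claim for $(G,\lambda c)$ does not transfer back, and the normalization does not let you assume $\Delta_{G,c}>1$ without loss of generality.

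That said, you have correctly located a real problem with the statement as printed. Bounding each $w(e)$ by $\frac{1}{\Delta_{G,c}|E(G)|}$ gives $w(H)\leq\frac{1}{\Delta_{G,c}}$ for any cutset $H$, and your chain needs $\frac{1}{\Delta_{G,c}}<\Delta_{G,c}$, i.e., $\Delta_{G,c}>1$. The paper's own proof quietly makes the same assumption: its claimed margin $\frac{\Delta_{G,c}-1}{\Delta_{G,c}}$ is positive only when $\Delta_{G,c}>1$ (and in any case should read $\Delta_{G,c}-\frac{1}{\Delta_{G,c}}$). The intended hypothesis is almost certainly $w(e)\leq\frac{\Delta_{G,c}}{|E(G)|}$, with the gap in the numerator (perhaps with a slack factor to keep the resulting inequality strict); under that reading your inequality chain works unconditionally, matches the paper's argument, and the scaling paragraph should simply be deleted rather than relied upon.
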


\begin{proof}
Let $w$ be an edge-costs function $w: E(G) \to [0,\frac{1}{\Delta_{G,c}|E(G)|}]$. Since $(G,c)$ and $(G,{c+w})$ have the same vertices and edges, every $S_i$-separating cut in $(G,c)$ is also a $S_i$-separating cut in $(G,{c+w})$ and vice versa. The value of every such cutset in $(G,{c+w})$ is ranged from the value of this cutset in $G$ to the value of this cutset in $G$ plus $\frac{1}{\Delta_{G,c}}$. In particular, $\mincut_{G,c}(S_i,\bar{S_i}) \leq \mincut_{G,{c+w}}(S_i,\bar{S_i}) \leq \mincut_{G,c}(S_i,\bar{S_i}) + \frac{1}{\Delta_{G,c}}$. Thus, $\mincut_{G,{c+w}}(S_i,\bar{S_i})$ is smaller (by at least $\frac{\Delta_{G,c}-1}{\Delta_{G,c}}$) than every cut that separates between $S_i$ and $\bar{S_i}$ in $G$. Therefore it must be the case that the cutsets of the minimum $S_i$-separating cuts in $(G,c)$ and in $(G,{c+w})$ are the same.
\end{proof}

We proceed with the proof of Lemma \ref{rank A is root size mimicking}.
Sample an edge-costs function $w: E(G) \to [0,\frac{1}{\Delta_{G,c}|E(G)|}]$
by independently choosing each $w(e)$ from that range uniformly at random.
By the above lemma, $A_{G,c}=A_{G,c+w}$ so in the rest of the proof we will omit the edge-costs function and denote this matrix by $A_G$. Now we argue that every mimicking network of $(G,c+w)$ must has at least $r:=\rank(A_G)$ edges.
Consider some network $G'$ with $|E(G')|< r$,
and let's see if it can potentially be a mimicking network of $(G,{c+w})$.
Notice that every edge-costs function $c': E(G') \to \R^+$ for this $G'$
yields a cutset-edge incidence matrix $A_{G',c'}$ of size $m\times(r-1)$ (if some graph has less than $r-1$ edges we can pad the irrelevant columns with zeros).
Since this matrix has only ones and zeros in its entries,
there are only $2^{m(r-1)}$ such matrices.
The next lemma proves that for every fixed matrix $A\in \B^{m\times(r-1)}$,
the probability that there exists a edge-costs function $c': E(G') \to \R^+$ such that $A \cdot \vec{c'}=A_G \cdot (\vec{c}+\vec{w})=\tcvector_G$ is zero.

\begin{lemma}\label{not exist}
Fix a matrix $A\in\B^{m\times (r-1)}$, and let $W_{A_{G}}$ and $W_{A}$ be the span of the columns of $A_G$ and $A$, respectively.
If each $w(e)$ is independently sampled uniformly at random from $[0,\frac{1}{\Delta_{G,c}|E(G)|}]$, then
$$ \Pr_w[A_G \cdot (\vec{c}+\vec{w}) \in W_{A}]=0.$$
\end{lemma}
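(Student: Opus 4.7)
The plan is to argue that the event $\{A_G\cdot(\vec c+\vec w)\in W_A\}$ corresponds to $\vec w$ lying in a proper affine subspace of $\R^{|E(G)|}$, and then use the fact that $\vec w$ is drawn from a continuous distribution (a product of uniform distributions on an interval) on this ambient Euclidean space, so the event has Lebesgue measure zero.

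The crucial dimensional observation is that $W_A$ has dimension at most $r-1$, because $A$ has only $r-1$ columns, while $W_{A_G}$ has dimension exactly $r=\rank(A_G)$. Thus $W_A$ cannot contain $W_{A_G}$; in particular, $\dim(W_A\cap W_{A_G})\le r-1$. I plan to exploit this via the preimage of $W_A$ under the linear map $\vec x\mapsto A_G\vec x$: since the event in question rewrites as $\vec w+\vec c\in A_G^{-1}(W_A)$, I would bound
\[
  \dim\bigl(A_G^{-1}(W_A)\bigr)
  =\dim(\ker A_G)+\dim(W_A\cap W_{A_G})
  \le \bigl(|E(G)|-r\bigr)+(r-1)=|E(G)|-1,
\]
so that $A_G^{-1}(W_A)-\vec c$ is a proper affine subspace of $\R^{|E(G)|}$. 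I would justify the dimension identity via the standard fact that, for a linear $T:\R^n\to\R^m$ and a subspace $U\subseteq\R^m$, the preimage $T^{-1}(U)$ has dimension $\dim(\ker T)+\dim(U\cap \text{Im}\,T)$ (since $T$ induces an isomorphism from a complement of $\ker T$ onto $\text{Im}\,T$).

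Having reduced the probability in question to $\Pr_{\vec w}[\vec w\in L]$ for a proper affine subspace $L\subset\R^{|E(G)|}$ of dimension at most $|E(G)|-1$, I would then invoke the fact that $L$ has $|E(G)|$-dimensional Lebesgue measure zero, and that each $w(e)$ is drawn independently from the uniform distribution on $[0,\tfrac{1}{\Delta_{G,c}|E(G)|}]$, so the joint distribution of $\vec w$ is absolutely continuous with respect to Lebesgue measure on $\R^{|E(G)|}$. Therefore $\Pr_{\vec w}[\vec w\in L]=0$, which is exactly the claim of the lemma.

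I do not expect any real obstacle here; the statement is essentially a linear-algebra-plus-measure-theory triviality once one observes the column-count mismatch between $A$ and $A_G$. The only mild subtlety is making sure to rewrite the condition $A_G(\vec c+\vec w)\in W_A$ as membership of $\vec w$ in an affine subspace of the \emph{edge-cost space} $\R^{|E(G)|}$ (not $\R^m$), since that is where the continuous measure lives; this is handled cleanly by passing through the preimage $A_G^{-1}(W_A)$ as above.
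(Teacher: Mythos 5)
Your proof is correct and complete. The paper's own argument is closely related in spirit but differs in its mechanics: it selects a single column $\vec a_1$ of $A_G$ that lies outside $W_A$, conditions on all coordinates $w(e_i)$ with $i\neq 1$, and observes that as $w(e_1)$ varies the vector $A_G(\vec c+\vec w)$ traces a line with direction $\vec a_1$, which can meet the subspace $W_A$ in at most one point; integrating over the remaining coordinates then gives probability zero. You instead perform a direct global dimension count on the preimage $A_G^{-1}(W_A)$, showing via $\dim A_G^{-1}(W_A)=\dim\ker A_G+\dim(W_A\cap \mathrm{Im}\,A_G)\le |E(G)|-1$ that the bad set of $\vec w$ is a proper affine subspace of the edge-cost space, and concluding it has Lebesgue measure zero. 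Both routes rest on the same rank gap $\rank(A)\le r-1<r=\rank(A_G)$; the paper's is a one-variable-at-a-time/Fubini argument, while yours is a one-shot measure-theoretic statement. Your version is arguably a bit cleaner: it avoids the paper's small notational slip of writing the affine line as $\alpha\vec a_1+\sum_{i=2}^r(c(e_i)+w(e_i))\vec a_i$, which omits the contributions of the columns of $A_G$ beyond the chosen basis (those terms must be included in the constant part of the line; they do not affect the argument but should appear), and it packages the conclusion as a single dimension inequality rather than a conditioning argument.
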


\begin{proof}
Without loss of generality let the first $r$ columns of the matrix $A_G$, $\{\vec{a_1}, \ldots, \vec{a_r}\}$, be the basis for the space $W_{A_G}$. Since $\rank(A)<r=\rank(A_G)$ we get that $\dim(W_{A})<\dim(W_{A_G})$. Thus there must be some basis vector of $W_{A_G}$, say without loss of generality $\vec{a_1}$, that not in the subspace $W_{A}$ and denote by $c(e_1)+w(e_1)$ to be its corresponding cost.

We will calculate the number of vectors in $W_{A}$ that can be expressed as linear combination with the vector $\vec{a_1}$. Let $f(\alpha)=\alpha \vec{a_1}+\sum_{i=2}^r (c(e_i)+w(e_i)) \vec{a_i}$. If there are at least two such vectors, $f(\alpha)$ and $f(\alpha')$ (where $\alpha,\alpha' \neq 0$) in $W_{A}$, then $\vec{a_1}$ will be in $W_{A}$ because $W_{A}$ is a subspace. So there is at most one $\alpha$ such that $f(\alpha) \in W_{A}$.

Since each $w(e_i)$ is sampled independently from a uniform distribution over $[0,\frac{1}{\Delta_{G,c}|E(G)|}]$, the probability that $c(e_1)+w(e_1)=\alpha$ is 0. By independence of $w(e_i)$ for all $i\in[r]$ we can sample $w(e_1)$ last which complete Lemma \ref{not exist}.
\end{proof}

To complete the proof of Lemma \ref{rank A is root size mimicking}, we will calculate the probability that there exists a mimicking network $(G',c')$ for the network $(G,c+w)$, such that $|E(G')|<r$.

\vspace{-.15in}
\begin{align*}
\textstyle
\Pr_w[\exists&\text{ mimicking network } (G',{c'}) \text{ with } |E(G')|<r] \\
  &= \textstyle \Pr_w[\exists A_{G',c'} \in \B^{m\times (r-1)} \text{ such that } A_{G',c'}\cdot \vec{c'}=A_G \cdot (\vec{c}+\vec{w})] \\
  &\leq \textstyle \Pr_w[\exists A_{G',c'} \in \B^{m\times (r-1)} \text{ such that } A_G \cdot (\vec{c}+\vec{w})\in W_{A_{G',c'}}] \\
  &\leq \textstyle \sum_{A\in \B^{m\times(r-1)}}\Pr_w[A_G \cdot (\vec{c}+\vec{w}) \in W_{A}] = 0,
\end{align*}
where the first equality is by definition of mimicking network, the following inequality is because the condition is necessary (but not sufficient), the second inequality is by a union bound over all possible matrices, and the final equality is by Lemma \ref{not exist}.
Denoting $\hat{c}=c+w$, we see that every mimicking network $(G',{c'})$ for the network $(G,\hat c)$ has at least $\rank(A_G)$ edges.
Lemma \ref{rank A is root size mimicking} follows.

\subsection{Lower bound for general graphs}\label{Lower bound for general graphs}

We now prove Theorem \ref{LB general graphs} which asserts that for every $k$ there exists a $k$-terminal network that its mimicking network must have $2^{\Omega (k)}$ non-terminals. The proof constructs a bipartite $k$-terminal network, with all its terminals on one side and all its non-terminals on the other side. As we will show, the rank of its cutset-edge incidence matrix is at least $2^{\Omega(k)}$, and the corresponding cuts are unique, hence applying Lemma \ref{rank A is root size mimicking} to this matrix will complete the proof of Theorem \ref{LB general graphs}.

\begin{proof}[Proof of Theorem \ref{LB general graphs}]
Consider a complete bipartite graph $G=(Q,U,E)$, where one side of the graph consist of the $k$ terminals $Q=\{q_1,\ldots,q_k\}$, the other side of the graph consists of $l=\binom{k}{\frac{2}{3}k}$ non-terminals $U=\{u_{S_1},\ldots,u_{S_l}\}$, with $S_1,\ldots,S_l$ denoting the different subsets of terminals of size $\frac{2}{3}k$.
The costs of the edges of $G$ are as follows.
Every non-terminal $u_{S_i}$ is connected by edges of cost $1$ to every terminal in $S_i$, and by edges of cost $2+\varepsilon$ to every terminal in $\bar{S_i}= Q \setminus S_i$,
for sufficient small $\varepsilon>0$, in fact $\varepsilon=\frac{1}{k}$ suffices.
Let $c(u_{S_i},q_j)$ denote the cost of edge $(u_{S_i},q_j)$, and define $c(u_{S_i},S_j):=\sum_{q\in S_j}c(u_{S_i},q)$.

\begin{lemma}\label{min-cut in bipartite graphs}
The minimum $S_i$-separating cut is obtained uniquely by the cut $(W,V(G)\setminus W)$ where $W=\{u_{S_i}\}\cup \bar{S}_i$ and $V(G)\setminus W=\{u_{S_j}:\ j\neq i\}\cup S_i$.
\end{lemma}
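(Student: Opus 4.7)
The plan is to exploit the bipartite structure of $G=(Q,U,E)$: since $U$ is an independent set, every $S_i$-separating cut is determined once we decide, for each non-terminal $u_{S_j}$, which side of the cut it lies on (while $S_i$ and $\bar S_i$ are forced to opposite sides). Moreover, the cost decomposes as a sum over non-terminals: placing $u_{S_j}$ on the $S_i$-side contributes exactly $c(u_{S_j},\bar S_i)$ to the cut, and placing it on the $\bar S_i$-side contributes $c(u_{S_j},S_i)$. Hence the minimum $S_i$-separating cut is realized by making the locally optimal choice at each non-terminal independently, and its \emph{uniqueness} will follow provided each of these local comparisons is strict.

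I would then verify the local comparison at each non-terminal. For $u_{S_i}$ the numbers are immediate: $c(u_{S_i},S_i) = \tfrac{2k}{3}$ (all cost-$1$ edges) and $c(u_{S_i},\bar S_i) = \tfrac{k}{3}(2+\varepsilon)$ (all cost-$(2+\varepsilon)$ edges), so the first is strictly smaller and $u_{S_i}$ must be placed on the $\bar S_i$-side. For $u_{S_j}$ with $j\neq i$, writing $a := |S_i \cap S_j|$ and using $|S_i|=|S_j|=\tfrac{2k}{3}$, a routine expansion gives
\[
c(u_{S_j}, \bar S_i) - c(u_{S_j}, S_i) \;=\; 2a(1+\varepsilon) - \tfrac{4k}{3} - \varepsilon k,
\]
and the goal becomes to show that this quantity is strictly negative.

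Assembling the two local conclusions over all $j$ produces exactly $W = \{u_{S_i}\} \cup \bar S_i$, and strictness of every comparison yields uniqueness of this minimum $S_i$-separating cut. The sole delicate point, and likely the main (though modest) obstacle, is calibrating $\varepsilon$ so that the displayed inequality is strict in the tightest case: since $S_j \neq S_i$ forces $a \leq \tfrac{2k}{3}-1$, the condition reduces to an elementary inequality whose solution is $\varepsilon(k-6)<6$. This is precisely what the choice $\varepsilon = 1/k$ from the construction delivers for every $k>5$, so the perturbation exists and is the only thing that must be checked carefully.
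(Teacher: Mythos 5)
Your proposal is correct and takes essentially the same approach as the paper: decompose the cut cost over non-terminals (using the independence of $U$), make the locally optimal strict choice at each $u_{S_j}$, and check that $\varepsilon$ is small enough for the comparison at $u_{S_j}$ ($j\neq i$) to go the right way. The only cosmetic difference is bookkeeping — you express the comparison as a linear function of the overlap $a=|S_i\cap S_j|$, while the paper compares $c(u_{S_j},S_i)$ against $\tfrac{2k}{3}+1$ using the total-incident-cost identity; both reduce to the same $\varepsilon$-calibration.
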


\begin{proof}
First, notice that for every $i\in[l]$ the total cost of all edges incident to $u_{S_i}$ is
\begin{equation}\label{edges in u_S}
   c(u_{S_i},Q) = c(u_{S_i},S_i) + c(u_{S_i},\bar{S_i}) = \frac{2k}{3}\cdot 1 + \frac{k}{3}(2+\varepsilon) = \frac{4k}{3}+\frac {k \varepsilon}{3}
\end{equation}

Consider such a set $S_i$, and let us calculate the minimum $S_i$-separating cut. Since non-terminals are not connected to each other, the decision is done separately for every non-terminal $u_{S_j}$ by simply comparing the costs of the edges $(u_{S_j},S_i)$ versus $(u_{S_j},\bar{S_i})$.
The crucial observation is that for non-terminal $u_{S_i}$:
  $$c(u_{S_i},{S_i}) = |S_i|\cdot 1 = \frac{2k}{3} < (2+\varepsilon)|\bar{S_i}|= c(u_{S_i},\bar{S_i})$$
For a non-terminal $u_{S_j}$ where $i\neq j$,
  $$c(u_{S_j},S_i) = |S_j\cap S_i|\cdot 1 + |S_i \setminus S_j|\cdot(2+\varepsilon) = |S_i|\cdot 1 + |S_i \setminus S_j|\cdot(1+\varepsilon) > \frac{2k}{3}+1 > c(u_{S_j},\bar{S_i})$$
where the last inequality is by (\ref{edges in u_S}) and because we choose $\varepsilon$ such that $\frac{k\varepsilon}{3}<1$.
It follows that for every $S_i$ the minimum $S_i$-separating cut will be $\{u_{S_i}\} \cup \bar{S_i}$ on one side, and $\{u_{S_j} :\  j\neq i \} \cup S_i$ on the other side, and moreover it is the unique minimizer.
\end{proof}

\begin{lemma}\label{rank AG=m}
Let $A_G$ be a cutset-edge incidence matrix of $G$.%
Then $\rank (A_G)\geq l$.
\end{lemma}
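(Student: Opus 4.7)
The plan is to exhibit $l$ linearly independent rows of $A_G$, namely the rows corresponding to the bipartitions $(S_i,\bar{S_i})$ for $i=1,\ldots,l$. Lemma~\ref{min-cut in bipartite graphs} describes each such cutset explicitly: the row $\vec{a}_i$ has a $1$ at column $(u_{S_i},q_j)$ iff $q_j\in S_i$, and a $1$ at column $(u_{S_p},q_j)$ with $p\neq i$ iff $q_j\notin S_i$. Reading the equation $\sum_{i=1}^l\alpha_i\vec{a}_i=0$ at each column $(u_{S_p},q_j)$ would then yield a condition on the coefficients $\alpha_i$ that I want to show forces all $\alpha_i=0$.

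First, I would manipulate that column equation by introducing the auxiliary quantities $T:=\sum_i\alpha_i$ and $\beta_j:=\sum_i\alpha_i\cdot\mathbf{1}[q_j\in S_i]$, together with the sign $\sigma_{pj}:=2\cdot\mathbf{1}[q_j\in S_p]-1\in\{-1,+1\}$. A few lines of elementary algebra (expanding $\mathbf{1}[q_j\notin S_i]=1-\mathbf{1}[q_j\in S_i]$ and collecting terms) should reduce the column equation to the clean form $\alpha_p\sigma_{pj}=\beta_j-T$. Since the right-hand side is independent of $p$, this gives the key relation $\alpha_p=\sigma_{pj}\sigma_{p'j}\,\alpha_{p'}$ for every pair $p,p'$ and every terminal $q_j$, where the product $\sigma_{pj}\sigma_{p'j}$ equals $+1$ precisely when $q_j$ lies in both or in neither of $S_p,S_{p'}$, and $-1$ otherwise.

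Second, I would exploit the specific choice $|S_i|=\tfrac{2}{3}k$ to finish. For any $p\neq p'$, inclusion-exclusion gives $|S_p\cap S_{p'}|\geq \tfrac{4k}{3}-k=\tfrac{k}{3}>0$, producing a $j$ with $\sigma_{pj}\sigma_{p'j}=+1$ and hence $\alpha_p=\alpha_{p'}$; and since the $S_i$ are distinct sets of the same size, $S_p\setminus S_{p'}\neq\emptyset$ produces a $j$ with $\sigma_{pj}\sigma_{p'j}=-1$ and hence $\alpha_p=-\alpha_{p'}$. Together these force $\alpha_p=0$ for every $p$, establishing linear independence and thus $\rank(A_G)\geq l$ as required.

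The main obstacle is nailing down the algebraic reduction to $\alpha_p\sigma_{pj}=\beta_j-T$ in the first step; once that clean identity is in hand, the combinatorial conclusion is essentially automatic, and in fact transparently explains the construction's choice of $|S_i|=\tfrac{2}{3}k$---large enough (strictly greater than $k/2$) that any two such sets intersect, yet still small enough that distinct sets of that size have nonempty symmetric difference.
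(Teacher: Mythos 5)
Your proposal is correct, and it follows the same overall strategy as the paper: identify the $l$ rows of $A_G$ corresponding to the bipartitions $(S_i,\bar S_i)$, describe the cutsets via Lemma~\ref{min-cut in bipartite graphs}, assume $\sum_i\alpha_i\vec a_i=0$, and read this off column-by-column at the edges $(u_{S_p},q_j)$ to force all $\alpha_i=0$. Your algebra reducing the column equation to $\alpha_p\sigma_{pj}=\beta_j-T$ is valid, and it makes the combinatorics cleaner than what the paper does.

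The one genuine difference is the last step. The paper considers only columns $(u_{S_i},q)$ with $q\in S_i$, uses a common terminal $q\in S_i\cap S_{i'}$ to deduce $\alpha_i=\alpha_{i'}$ for all pairs (so all coefficients are equal), and then substitutes back into a single column equation to get a positive multiple of $\alpha_1$ equal to zero. You instead exploit the factored form $\alpha_p\sigma_{pj}=\beta_j-T$ twice: once at a $j$ in $S_p\cap S_{p'}$ to get $\alpha_p=\alpha_{p'}$, and once at a $j$ in $S_p\setminus S_{p'}$ to get $\alpha_p=-\alpha_{p'}$, which yields $\alpha_p=0$ directly without the back-substitution. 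Both finishes are correct; yours avoids the (slightly opaque) counting coefficient that appears in the paper's last line and makes transparent exactly which properties of the set system (pairwise nonempty intersection and pairwise nonempty symmetric difference, both guaranteed by taking distinct sets of size $\tfrac{2}{3}k$) are being used.
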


\begin{proof}
By definition, $A_G$ is a matrix of size $m\times kl$. Since $\binom{k}{\frac{2}{3}k}=l\leq m=2^{k-1}-1$, we need to show that $l$ rows of $A_G$ are linearly independent. Assume without loss of generality that the first $l$ rows of $A_G$ corresponds to the $l$ subsets of terminals of size $\frac{2}{3}k$, such that row $t$ corresponds to subset $S_t$. We will prove that these $l$ first rows of $A_G$ are linearly independent, i.e. $\sum_{t=1}^l\alpha_{t} {A_G}_t=\bar{0} \iff  \alpha_1=\ldots = \alpha_l=0$. We will focus on a column $j$ in $A_G$ that corresponds to some edge $(u_{S_i},q)$ where $q\in S_i$. In order to know how the $j$-th column in $A_G$ looks like, we need to know in which minimum cuts the edge $(u_{S_i},q)$ participates, i.e. we go over all the rows of $A_G$ and in each row $t$ we will ask if the edge is in the cutset of the minimum $S_t$-separating cut or not (if there is 1 or 0 in $({A_G})_{t,j}$).

According to the construction of $G$, if $q\in S_i$, then the terminal $q$ and the non-terminal $u_{S_i}$ are in different sides of the minimum $S_i$-separating cut, and the edge $(u_{S_i},q)$ in that cutset. For some subset $S_t$, where $t \neq i$ and $q\in \bar{S_t}$, the side of the minimum cut that contains the terminal $q$ will be $\{u_{S_t}\} \cup \bar{S_t}$, and the other side that contains $u_{S_i}$ will be $\{u_{S_f} :\  f\in[l],f\neq t \} \cup S_t$. Then again, the edge $(u_{S_i},q)$ will be in that cutset. It remain to look on some subset $S_t$, where $t \neq i$ and $q \in S_t$. The cut will be the same as above, but now both of the vertices will be in one side of the cut, i.e. $q,u_{S_i} \in \{u_{S_f} :\ f\in[l], f\neq t \} \cup S_t$, so the edge $(u_{S_i},q)$ will not participate in this cutset. In conclusion, The edge $(u_{S_i},q)$ participate in the cutset of the minimum $S_i$-separating cut, and in all the cutsets of the minimum $S_t$-separating cut such that $S_t$ do not contains the terminal $q$. Hence we will get that the entry $j$ (the column of $A_G$ that corresponds to the edge $(u_{S_i},q)$) in the vector $ \sum_{t=1}^l\alpha_{t} {A_G}_t$ is:
\begin{equation}
(\sum_{t=1}^l\alpha_{t} {A_G}_t)_j=\sum_{t=1}^l\alpha_{t} ({A_G})_{t,j}=\alpha_{i} +\sum_{t\in[l]:\ q\notin S_t }\alpha_{t}=0 \label{lin ind matrix}
\end{equation}
Every two different subsets $S_i$ and $S_{i'}$, have at least $\frac{1}{3}k$ terminals in common. In particular there exist some terminal $q$ contained in both of them. Looking at the entries corresponding to $(u_{S_i},q)$ and $(u_{S_{i'}},q)$ in the vector $ \sum_{h=1}^l\alpha_{h} {A_G}_h$ we have

$$\alpha_{i} +\sum_{t\in[l]:\ q\notin S_t }\alpha_{t}=0$$
$$\alpha_{{i'}} +\sum_{t\in[l]:\ q\notin S_t }\alpha_{t}=0$$

Thus $\alpha_{i}=\alpha_{{i'}}$ for every $i,i' \in [l]$. So we get the equation $\binom {k-1}{\frac{2}{3}k-1}\alpha_1=0$ in every entry in the vector equation $ \sum_{t=1}^l\alpha_{t} {A_G}_t=0$, and Lemma \ref{rank AG=m} follows.
\end{proof}

We can now complete the proof of Theorem \ref{LB general graphs}. Applying Lemma \ref{rank A is root size mimicking} to our bipartite graph $G$ and its cutset-edge incidence matrix $A_G$, we get that every mimicking network $G'$ of $G$ has at least $l=2^{\Omega(k)}$ edges. It follows that $|V(G')| \geq \sqrt{|E(G')|} \geq 2^{\Omega(k)}$.
\end{proof}

\subsection{Lower bound for planar graphs}\label{Lower bound for planar graphs}

In this section we prove Theorem \ref{LB planar graph},
which shows a planar $k$-terminal network,
every mimicking network of which must have at least $k^2$ edges.
The proof constructs a grid of size $O(k^2)$ with $2k$ terminals, and applies Lemma \ref{rank A is root size mimicking} on graph's cutset-edge incidence matrix.

\begin{proof}[Proof of Theorem \ref{LB planar graph}]
Construct a planar $2k$-terminal network $G$ with $2k$ terminals $Q=\{v_1, \ldots, v_k,\linebreak[2] h_1, \ldots, h_k\}$ as follows. Consider a grid with $k$ columns and $k$ rows. Let $u_{i,j}$ be the non-terminal vertex at the $i$th column and $j$th row of the grid. To every vertex $u_{1,j}$, for $1 \leq j \leq k$, we attach a terminal vertex $v_j$ of degree one, and at every vertex $u_{i,1}$, for $1 \leq i \leq k$, we attach a terminal vertex $h_i$ of degree one. From now on, we will refer to $i$ and $j$ as indices between $1$ to $k$, including $1$, excluding $k$.

The costs associated with the edges of $G$ are as follows: every edge that connects between a terminal to a non-terminal costs $k^4$. The cost of all the edges between the vertices $u_{i,k}$ and $u_{i+1,k}$, and between the vertices $u_{k,j}$ and $u_{k,j+1}$, is $k^4$. All the remaining vertical edges will have cost $1$, i.e. all the edges between $u_{i,j}$ and $u_{i+1,j}$. All the remaining horizontal edges, i.e. every edge between $u_{i,j}$ and $u_{i,j+1}$, will cost $1-\varepsilon_{i,j}$, where $\varepsilon_{i,j}=\frac{j}{k^4}$. Notice that for every $k>2$ the sum of all the $\varepsilon_{i,j}$ in $G$ is

\begin{equation}\label{less than 1}
\sum_{i,j=1}^{k-1}\varepsilon_{ij} \leq \frac{1}{k^4}\sum_{i,j=1}^k 2k=\frac{2k^3}{k^4}<1
\end{equation}

Denote by $S_{i,j}$ the subset of the terminals $\{h_1, \ldots, h_i,v_1, \ldots, v_j\}$. We are interested in all the $(k-1)^2$ minimum $S_{i,j}$-separating cuts. See the grid $G$ in Figure \ref{grid LB planar}.
\begin{figure}[ht]
\centering
\includegraphics[angle=0,width=0.75\textwidth]{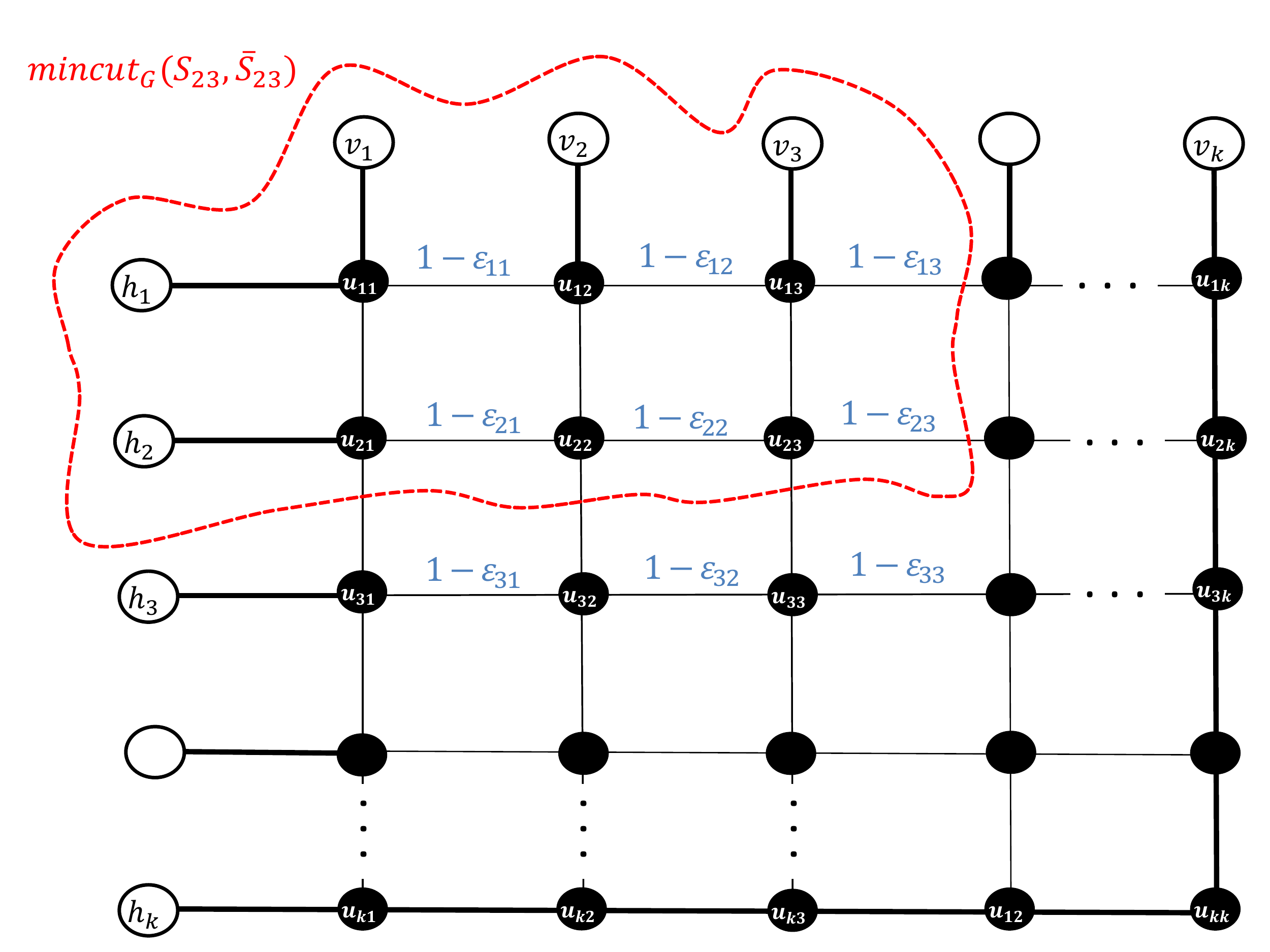}
\caption{The $2k$-terminal network, which used in Theorem \ref{LB planar graph}, with minimum $S_{23}$-separating cut (the red dashed line).
 All the vertical and horizontal bold edges has cost $k^4$, the remaining horizontal edges has cost $1-\veps_{i,j}$ and all the remaining vertical edges has cost 1.}
\label{grid LB planar}
\end{figure}

\begin{lemma}\label{min-cut}
The minimum $S_{i,j}$-separating cut is obtained uniquely by the cut $(W,V(G)\setminus W)$ where $W=S_{i,j}\cup \{u_{\alpha,\beta}: 1\leq \alpha \leq i, 1\leq \beta \leq j\}$.
\end{lemma}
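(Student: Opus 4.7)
The plan is to prove a matching lower bound on the cost of any $S_{i,j}$-separating cut in $G$ and then verify the bound is attained uniquely by the stated cut.

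First I would reduce to ``cheap'' edges. The proposed cut has cost $i + j - \sum_{\alpha=1}^{i}\varepsilon_{\alpha,j} < i+j \leq 2(k-1)$, while any cut containing even a single $k^4$-cost edge has cost at least $k^4 > 2k$ for $k\geq 2$. So every minimum $S_{i,j}$-separating cut uses only cheap edges: vertical edges (cost $1$) in rows $1,\ldots,k-1$ and horizontal edges (cost $1-\varepsilon_{\alpha,\beta}$) in columns $1,\ldots,k-1$.

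Next, I would lower-bound the edge count of any such cut. For each $\beta\in\{1,\ldots,j\}$ the terminal $v_\beta$ forces $u_{1,\beta}\in W$ while the expensive right-column edges force the rim vertex $u_{k,\beta}\in \bar W$, so any separating cut has at least one vertical edge in row $\beta$; symmetrically every column $\alpha\in\{1,\ldots,i\}$ contributes at least one horizontal cut-edge. Thus any $S_{i,j}$-separating cut has at least $i+j$ cheap edges. By inequality~(\ref{less than 1}), a cut using exactly $\ell$ cheap edges has cost in $(\ell-1,\ell]$; hence a cut with $\ell\geq i+j+1$ strictly exceeds the proposed cost $i+j-ij/k^4$. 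So the minimum cut uses exactly $i+j$ cheap edges: one vertical edge per row $\beta\in\{1,\ldots,j\}$ (at some column $\alpha_\beta$), one horizontal edge per column $\alpha\in\{1,\ldots,i\}$ (at some row $\beta'_\alpha$), and nothing else.

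Finally, I would pin down these positions via planar topology. Since no vertical edges are cut in any row $>j$, every such row remains connected after removing the cut and, containing the rim vertex $u_{k,\beta}\in\bar W$, lies entirely in $\bar W$. If $\beta'_\alpha>j$ for some $\alpha\leq i$, then $u_{\alpha,j+1}$ sits below column $\alpha$'s horizontal cut and is therefore connected along column $\alpha$ to $u_{\alpha,1}\in W$, forcing $u_{\alpha,j+1}\in W$---a contradiction. Hence $\beta'_\alpha\leq j$ for every $\alpha\leq i$, and the cost $i+j-\sum_{\alpha=1}^{i}\beta'_\alpha/k^4$ is then uniquely minimized by $\beta'_\alpha=j$ for all $\alpha$. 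A symmetric argument (columns $>i$ carry no horizontal cut and so lie entirely in $\bar W$) forces $\alpha_\beta=i$ for all $\beta\leq j$, recovering exactly the proposed $W=S_{i,j}\cup\{u_{\alpha,\beta}:\alpha\leq i,\beta\leq j\}$.

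The main obstacle will be this last topological-rigidity step that forces $\beta'_\alpha\leq j$ using only the side-memberships propagated from the forced terminal and rim vertices through the $i+j$ cut-edges. Once rigidity is established, uniqueness drops out of the monotonicity of $\varepsilon_{\alpha,\beta}=\beta/k^4$ in $\beta$, which is precisely why the edge-costs were designed with these tiny, strictly-increasing increments.
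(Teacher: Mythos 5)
Your proposal is correct and follows essentially the same approach as the paper: exclude the $k^4$-cost edges, force at least $i+j$ cheap cut edges via the connectivity of terminals to the rim through uncut columns/rows, cap the count at $i+j$ because the $\varepsilon$'s sum to less than $1$, and then pin down the exact edge positions by topology plus the $\varepsilon$-cost gradient. One wrinkle worth flagging: the final sentence "a symmetric argument $\ldots$ forces $\alpha_\beta = i$" overstates the symmetry. The symmetric topological observation (columns $> i$ carry no horizontal cut and hence lie entirely in $\bar W$) yields only $\alpha_\beta \leq i$, and there is no symmetric cost argument available, because every vertical edge costs exactly $1$ independently of its column, so the $\varepsilon_{\alpha,\beta}$ gradient that uniquely forced $\beta'_\alpha = j$ has no counterpart for the vertical positions. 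What actually forces $\alpha_\beta = i$ is a consequence of having already established $\beta'_\alpha = j$ for every $\alpha \leq i$: with those horizontal cuts fixed, each $u_{\alpha,\beta}$ with $\alpha\leq i$, $\beta \leq j$ is connected to $h_\alpha \in S_{i,j}$ along the uncut lower portion of column $\alpha$, while every other non-terminal is connected to $\bar S_{i,j}$ through rows $>j$ or columns $>i$; thus $W$ is completely determined, and writing out $\delta(W)$ shows the unique vertical cut in each row $\beta\leq j$ sits between columns $i$ and $i+1$. This is a small and easy fix, and the rest of the proof is sound; the paper's own writeup of this last step is similarly terse.
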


\begin{proof}[Proof of Lemma~\ref{min-cut}]
Let $c_{i,j}$ be the cost of the $S_{i,j}$-separating cut $(W,V(G)\setminus W)$ described in the lemma. By a simple calculation, $c_{i,j}=i+j-\sum_{\alpha=1}^i \varepsilon_{\alpha,j}$. Assume towards contradiction that the above cut $(W, V(G)\setminus W)$ is not the minimum $S_{i,j}$-separating cut in $G$, i.e. $\mincut_G(S_{i,j},\bar{S}_{i,j}) < c_{i,j} < k $. Thus all the edges that are contained in $\mincut_G(S_{i,j},\bar{S}_{i,j})$ have costs less then $k$. In particular, the edges with cost $k^4$ are not contained in $\mincut_G(S_{i,j},\bar{S}_{i,j})$, so the two terminals $v_{k}$ and $h_{k}$ are connected (which means, not disconnected when we remove that cutset).

The cut $(W,V(G)\setminus W)$ contains $i$ horizontal edges and $j$ vertical edges. This is the minimal number of vertical and horizontal edges that need to be removed in the minimum cut in order to separate $S_{i,j}$ from $\bar{S}_{i,j}$. Otherwise, if we remove less then $i$ horizontal edges, there must be some terminal, $h_{\alpha}$, in $S_{i,j}$, such that no horizontal edges were removed from its row, thus $h_{\alpha}$ connected to the terminals $v_{k}$ and $h_{k}$ that in $\bar{S}_{i,j}$. The argument for $j$ vertical edges is similar.

Another observation is that the total cost of every $i+j+1$ or more edges in $G$ (with cost less then $k^4$) is not less than $i+j+1-\sum_{\alpha,\beta=1}^{k}\varepsilon_{\alpha,\beta}>i+j$, where the inequality is by Equation (\ref{less than 1}). We conclude that the minimum cut has exactly $j$ vertical edges and $i$ horizontal edges.

By now we know that the cutset $\mincut_G(S_{i,j},\bar{S}_{i,j})$ contains $i$ horizontal edges and $j$ vertical edges. Furthermore, we know that the cutset $(W,V(G)\setminus W)$ contains the first $i$ horizontal edges between the $j$th column to the $(j+1)$st column , and the first $j$ vertical edges between the $i$th row to the $(i+1)$st row. Thus, $\mincut_G(S_{i,j},\bar{S}_{i,j})$ must contains at least one different edge than the cut $(W,V(G)\setminus W)$. There are two cases:
\begin{enumerate}
  \item

  If $\mincut_G(S_{i,j},\bar{S}_{i,j})$ contains at least one vertical edge on some column $\beta>j$, then it contains no more than $j-1$ vertical edges from the columns between $1$ to $j$. As before, there exist some terminal that is connected to at least one terminal from $\bar{S}_{i,j}$. The same argument works for horizontal edge that removed from row $\alpha>i$. Hence, this case is impossible.

  \item

  If all the edges that participate in $\mincut_G(S_{i,j},\bar{S}_{i,j})$ are from the first $i$ rows and first $j$ columns. We will calculate the minimal value of a cut that we can obtain. As mentioned above, in order to separate we need to remove one edge from every column and from every row. The cost of all the vertical edges is identical so already need to pay $j$. Notice that in every row $\alpha$ the following inequality chain holds
  $$\varepsilon_{\alpha,1}<\varepsilon_{\alpha,2}<\ldots < \varepsilon_{\alpha,k-1}$$
  Therefore, the cost of the cheapest edge that we can take from that row is $1-\varepsilon_{\alpha,j}$. Summing all these costs we get $j+ \sum_{\alpha=1}^i(1-\varepsilon_{\alpha,j})$.

\end{enumerate}
From the second case we get that $\mincut_G(S_{i,j},\bar{S}_{i,j})=c_{i,j}$, and that the cut $(W, V(G)\setminus W)$ is the only cut with that value as we wanted.
\end{proof}

Proceeding with the proof of Theorem \ref{LB planar graph}, let $A_G$ be a cutset-edge incidence matrix of $G$ (see Definition \ref{def: cutset-edge matrix}).

\begin{lemma}\label{lemma:rank(AG) planar}
$\rank(A_G)\geq (k-1)^2$
\end{lemma}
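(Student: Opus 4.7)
The plan is to exhibit $(k-1)^2$ linearly independent rows of $A_G$ by choosing the rows indexed by $\mincut_G(S_{i,j},\bar S_{i,j})$ for $(i,j)\in[k-1]^2$, and reading off coefficients against a cleverly chosen set of columns. Lemma~\ref{min-cut} gives us an explicit description of these cuts: the cutset of $\mincut_G(S_{i,j},\bar S_{i,j})$ consists of exactly $i$ horizontal edges, namely $e^h_{\alpha,j}:=(u_{\alpha,j},u_{\alpha,j+1})$ for $\alpha=1,\ldots,i$, together with $j$ vertical edges $e^v_{i,\beta}:=(u_{i,\beta},u_{i+1,\beta})$ for $\beta=1,\ldots,j$. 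The key structural observation is that the horizontal edge $e^h_{a,b}$ belongs to $\mincut_G(S_{i,j},\bar S_{i,j})$ if and only if $j=b$ and $a\le i$; the column index $b$ is rigidly determined by the cut, which is exactly the ``degree of freedom'' we will exploit.

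Concretely, denote the row of $A_G$ corresponding to $S_{i,j}$ by $r_{i,j}$, and suppose a linear combination $\sum_{(i,j)\in [k-1]^2}\alpha_{i,j}\,r_{i,j}=\bar 0$ holds. Fix any column $b\in[k-1]$ and read the coordinate of this linear combination at the column of $A_G$ indexed by the horizontal edge $e^h_{a,b}$, for each $a\in[k-1]$. By the observation above, only rows with $j=b$ contribute, and only those with $i\ge a$ among those, so we obtain
\begin{equation*}
\sum_{i=a}^{k-1}\alpha_{i,b}=0, \qquad a=1,\ldots,k-1.
\end{equation*}
This triangular system can be solved by back-substitution from $a=k-1$ down to $a=1$: the equation at $a=k-1$ gives $\alpha_{k-1,b}=0$; subtracting consecutive equations yields $\alpha_{a,b}=0$ for every $a$. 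Since $b$ was arbitrary, all $\alpha_{i,j}$ vanish, so the $(k-1)^2$ chosen rows are linearly independent and $\rank(A_G)\ge (k-1)^2$.

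There is no real obstacle here; the work has already been done by Lemma~\ref{min-cut}, which pins down the unique minimum $S_{i,j}$-separating cut as a rectangular boundary. The only thing one must be careful about is to select a set of ``diagnostic'' columns that decouples the two coordinates $(i,j)$: the horizontal edges indexed by a fixed column $b$ isolate the variable $j=b$, and within that slice the inclusion pattern $a\le i$ produces a triangular (hence invertible) submatrix. Vertical edges would work symmetrically, but are not needed. Combined with Lemma~\ref{rank A is root size mimicking} and Lemma~\ref{min-cut} (which supplies the required uniqueness of minimum cuts), this yields Theorem~\ref{LB planar graph}.
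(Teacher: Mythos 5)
Your proof is correct and takes essentially the same approach as the paper: both exploit the fact that the horizontal edge $(u_{a,b},u_{a,b+1})$ lies in $\mincut_G(S_{i,j},\bar S_{i,j})$ exactly when $j=b$ and $a\le i$, yielding a triangular structure on the $(k-1)^2\times(k-1)^2$ submatrix of $A_G$ indexed by the cuts $S_{i,j}$ and the $\veps_{i,j}$-cost horizontal edges. The paper exhibits this as a single lower-triangular matrix under a lexicographic ordering, while you decompose it into $k-1$ independent triangular blocks (one per $b$) and solve by back-substitution, but the underlying observation and the resulting rank bound are identical.
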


\begin{proof}[Proof of Lemma~\ref{lemma:rank(AG) planar}]
Assume without loss of generality that the first $(k-1)^2$ columns of $A_G$ correspond to all the horizontal edges that their cost involve an $\varepsilon_{i,j}$ variable. We will order them according to their order in the grid from left to right, up to down. i.e. the first $(k-1)^2$ columns of $A_G$ will correspond to the edge costs in the following order:
$$1-\varepsilon_{1,1}\ , \ldots,\ 1-\varepsilon_{1,k-1}\ ,\ 1-\varepsilon_{2,1}\  ,\ldots,\ 1-\varepsilon_{2,k-1}\ ,\ldots,\ 1-\varepsilon_{k-1,1}\ ,\ldots,\ 1-\varepsilon_{k-1,k-1}$$
In addition, without loss of generality the first $(k-1)^2$ rows of $A_G$ correspond to the $(k-1)^2$ minimum $S_{i,j}$-separating cuts in $G$ which deals with the $(k-1)^2$ subsets of terminals we are interested in according to the following order:
$$S_{1,1}\ ,\ldots,\ S_{1,k-1}\ ,\ S_{2,1}\ ,\ldots,\ S_{2,k-1}\ ,\ldots,\ S_{k-1,1}\ ,\ldots,\ S_{k-1,k-1}  $$

We will show that the sub matrix of $A_G$ formed by first $(k-1)^2$ rows and columns of $A_G$ is a lower triangular matrix, which imply that the first $(k-1)^2$ columns are linearly independent. Given column $t$ that corresponds to $1-\varepsilon_{ij}$, we need to show that the entry $t,t$ is 1, and all the $t-1$ first entries are 0. As we set above, the $t$-th row of $A_G$ corresponds to the minimum $S_{i,j}$-separating cut. According to Lemma \ref{min-cut} the total costs of the horizontal edges that participate in the minimum $S_{i,j}$-separating cut is $\sum_{\alpha=1}^i(1-\varepsilon_{\alpha, j})$. Thus it is clear that entry $t,t$ is 1, because the edge $1-\varepsilon_{ij}$ participates in the minimum $S_{i,j}$-separating cut. It remains to show that all the $t-1$ first entries are 0. All the first $t-1$ rows correspond to subsets of terminals $S_{\alpha, \beta}$ such that $\alpha < i$ or $\alpha=i$ and $\beta<j$. As we saw above, the edge $1-\varepsilon_{i,j}$ participates only in all the minimum cuts of the subsets $S_{\alpha, j}$ where $\alpha \geq i$. Thus, there is 0 in all the first $t-1$ entries in the $t$-th column. So we prove that the first $(k-1)^2$ rows and columns of $A_G$ form a lower triangular matrix as we wanted, and the Lemma follows.
\end{proof}

To complete the proof of Theorem \ref{LB planar graph}, we apply Lemma \ref{rank A is root size mimicking} to our grid  $G$ and its cutset-edge incidence matrix $A_G$. We get that there exists an edge-costs function for $G$ such that every mimicking network of $G$ has at least $\rank(A_G)=\Omega(k^2)$ edges and the theorem follows.

\end{proof}

\section{Lower Bounds for Data Structures}
\label{app:LB4DS}

We can extend the definition of a (deterministic) TC scheme to a randomized one
by letting the two operations access a common source of random bits.
(We do not assume the random bits are stored explicitly in $M$,
even though it might be required in some implementations.)
We then change the requirement from the query operation to be
$$\Pr[Q(S;M)=\mincut_{G,c}(S,\bar{S})]\geq 2/3,$$
where the probability is taken over the data structure's random bits.
Our lower bound in Theorem \ref{thm:LBdataStructure} holds also for randomized
schemes, even those with shared randomness (that is not stored explicitly).

We now prove Theorem~\ref{thm:LBdataStructure}, which asserts that
a terminal-cuts scheme requires $2^{\Omega(k)}$ words in the worst-case.
Fix $k$ and let $(G,c)$ be the $k$-terminal bipartite graph constructed
in Section~\ref{Lower bound for general graphs}.
Recall that $l:=\binom{k}{2k/3}$ is the number of subsets of terminals
of size $2k/3$, each corresponding to a non-terminal in $G$.
The number of vertices in $G$ is $n:=k+l=2^{\Theta(k)}$,
and size of a machine word is $O(\log n)=\Theta(k)$ bits.
Assume towards contradiction there is a terminal-cuts scheme
that can handle every $k$-terminal network using less than ${l}/{100}$ bits.
For now, let us assume the scheme is deterministic.

Let $A_{G,c}$ be the cutset-edge incidence matrix of $(G,c)$.
By Lemma \ref{rank AG=m}, $\rank(A_{G,c})\geq l$.
Let us assume that the first $l$ columns of $A_{G,c}$ are linearly independent
(otherwise, we just reorder them),
and let $e_j$ denote the edge of $G$ corresponding to the $j$-th column of $A_{G,c}$.

Let $\mathcal{W}$ denote the collection of $2^{l}$ edge-costs functions
$w:E(G)\to \{0,\frac{1}{6k^2l}\}$ satisfying that $w(e_j)=0$ for all $j>l$.
As in Section~\ref{Lower bound for general graphs},
every function $w\in \mathcal{W}$ defines a graph $(G,{c+w})$,
whose cutset-edge incidence matrix is denoted $A_{G,{c+w}}$.
We can now apply Lemma \ref{min-cuts incidence matrix equality},
since $6k> \Delta_{G,c}$ and $|E(G)|=kl$,
and obtain that for all $w\in \mathcal{W}$ the network $(G,c+w)$ has the
same cutset-edge incidence matrix as $(G,c)$, i.e. $A_{G,c}=A_{G,{c+w}}$.
Using the above bound on the rank of $A_{G,c}$ we can deduce that
for every two different functions $w\neq w'\in \mathcal{W}$,
we have $A_{G,c} \cdot (\vec{c}+\vec{w})\neq A_{G,c} \cdot (\vec{c}+\vec{w'})$, i.e. there exists $S\subset Q$ such that $\mincut_{G,c+w}(S,\bar{S})\neq \mincut_{G,c+{w'}}(S,\bar{S})$.

Now, the assumed terminal-cuts scheme uses less than $l/100$ bits,
and thus, by the pigeonhole principle, there must be $w\neq w'\in \mathcal{W}$,
whose preprocessing results with the exact same memory image $M=P(G,c+w)=P(G,c+{w'})$.
Consequently, for all queries $S\subset Q$, the scheme will report the same
answer under inputs $(G,c+w)$ and $(G,c+w')$, which means that
$\mincut_{G,c+w}(S,\bar{S}) = \mincut_{G,c+w'}(S,\bar{S})$ and is a contradiction.

Notice that the edge costs of the graphs $(G,c+w)$ for $w\in\mathcal W$
can be easily scaled so that they are all in the range
$\aset{0,1,\ldots,n^{O(1)}}$.
We conclude that a terminals-cut scheme for $k$ terminals requires,
in the worst case, storage of at least
$\tfrac{l/100}{O(\log n)} \geq 2^{\Omega(k)}$ words.
This proves Theorem~\ref{thm:LBdataStructure} for deterministic schemes.

\paragraph{Proof for randomized schemes (sketch).}
The proof for randomized schemes follows the same outline,
the main difference being that we replace the simple collision argument
between $w\neq w'$, with well-known entropy (information) bounds.
First, the data structure's success probability can be amplified
to at least (say) $1-2^{2k}$, by straightforward independent repetitions,
while increasing the storage requirement by a factor of $O(k)$.
So assume henceforth this very high probability is the case.

Now let us choose $w\in\mathcal W$ at random,
which corresponds to choosing a random string of $l$ bits.
Using the data structure, one can retrieve with very high probability the value
$\mincut_{G,c+w}(S,\bar{S}) = A_{G,c} \cdot (\vec{c}+\vec{w})$.
Applying a union bound over all $2^k$ subsets $S\subset Q$,
with very high probability one would retrieves correctly all these values.
In this case, since the first $l$ columns of $A_{G,c}$ yield an invertible matrix,
we could actually recover the vector $w$ itself (with high probability).
But since $w$ is effectively a random string of $l$ bits,
it follows by standard entropy bounds that $M$ must have at least
$2^{\Omega(l)}$ bits,
and the theorem is completed just like for a deterministic scheme.

\section{Concluding Remarks}
\label{sec:conclusion}

Define
a \emph{generalized mimicking network} of a $k$-terminal network $(G,c)$
to be a $k$-terminal network $(G',c')$ with the same set of terminals $Q$,
such that for all disjoint $S,T\subset Q$, the minimum cost of a cut separating
$S$ from $T$ is the same, namely
$\mincut_{G',c'}(S,T) = \mincut_{G,c}(S,T).$
Although this definition increases the number of cuts that must be preserved,
our upper bound for planar graphs extends to this more general definition
(but with larger constants in the exponents),
and the same is true for the upper bound for general graphs by \cite{HKNR98}.

{%
\bibliographystyle{alphainit}
\bibliography{robi,drafts}

\newcommand{\etalchar}[1]{$^{#1}$}
\begin{thebibliography}{AKPW95}

\bibitem[AKPW95]{AKPW95}
N.~Alon, R.~M. Karp, D.~Peleg, and D.~West.
\newblock A graph-theoretic game and its application to the $k$-server problem.
\newblock {\em SIAM J. Comput.}, 24(1):78--100, February 1995.

\bibitem[Bar96]{Bartal96}
Y.~Bartal.
\newblock Probabilistic approximation of metric spaces and its algorithmic
  applications.
\newblock In {\em 37th Annual Symposium on Foundations of Computer Science},
  pages 184--193. IEEE, 1996.

\bibitem[BK96]{BK}
A.~A. Bencz{\'u}r and D.~R. Karger.
\newblock Approximating ${\rm s}$-${\rm t}$ minimum cuts in $\tilde {O}(n\sp
  2)$ time.
\newblock In {\em 28th Annual ACM Symposium on Theory of Computing}, pages
  47--55. ACM, 1996.

\bibitem[BSS09]{BSS08}
J.~D. Batson, D.~A. Spielman, and N.~Srivastava.
\newblock Twice-ramanujan sparsifiers.
\newblock In {\em 41st Annual ACM symposium on Theory of computing}, pages
  255--262. ACM, 2009.

\bibitem[CE06]{CE06}
D.~Coppersmith and M.~Elkin.
\newblock Sparse sourcewise and pairwise distance preservers.
\newblock {\em SIAM J. Discrete Math.}, 20:463--501, 2006.

\bibitem[Chu12]{Chuzhoy12}
J.~Chuzhoy.
\newblock On vertex sparsifiers with {S}teiner nodes.
\newblock In {\em 44th symposium on Theory of Computing}, pages 673--688. ACM,
  2012.

\bibitem[CLLM10]{CLLM10}
M.~Charikar, T.~Leighton, S.~Li, and A.~Moitra.
\newblock Vertex sparsifiers and abstract rounding algorithms.
\newblock In {\em 51st Annual Symposium on Foundations of Computer Science},
  pages 265--274. IEEE Computer Society, 2010.

\bibitem[CSWZ00]{CSWZ00}
S.~Chaudhuri, K.~V. Subrahmanyam, F.~Wagner, and C.~D. Zaroliagis.
\newblock Computing mimicking networks.
\newblock {\em Algorithmica}, 26:31--49, 2000.

\bibitem[EGK{\etalchar{+}}10]{EGKRTT10}
M.~Englert, A.~Gupta, R.~Krauthgamer, H.~R{\"a}cke, I.~Talgam-Cohen, and
  K.~Talwar.
\newblock Vertex sparsifiers: New results from old techniques.
\newblock In {\em 13th International Workshop on Approximation, Randomization,
  and Combinatorial Optimization}, volume 6302 of {\em Lecture Notes in
  Computer Science}, pages 152--165. Springer, 2010.

\bibitem[FM95]{FM95}
T.~Feder and R.~Motwani.
\newblock Clique partitions, graph compression and speeding-up algorithms.
\newblock {\em J. Comput. Syst. Sci.}, 51(2):261--272, 1995.

\bibitem[HKNR98]{HKNR98}
T.~Hagerup, J.~Katajainen, N.~Nishimura, and P.~Ragde.
\newblock Characterizing multiterminal flow networks and computing flows in
  networks of small treewidth.
\newblock {\em J. Comput. Syst. Sci.}, 57:366--375, 1998.

\bibitem[HS85]{HS85b}
D.~S. Hochbaum and D.~B. Shmoys.
\newblock An {$O(\vert V\vert ^2)$} algorithm for the planar {$3$}-cut problem.
\newblock {\em SIAM J. Algebraic Discrete Methods}, 6(4):707--712, 1985.

\bibitem[KRTV12]{KRTVdraft}
A.~Khan, P.~Raghavendra, P.~Tetali, and L.~A. V{\'e}gh.
\newblock On mimicking networks representing minimum terminal cuts.
\newblock Manuscript, July 2012.

\bibitem[KZ12]{KZ12}
R.~Krauthgamer and T.~Zondiner.
\newblock Preserving terminal distances using minors.
\newblock To appear in ICALP. Preliminary version available at
  \url{http://arxiv.org/abs/1202.5675}, 2012.

\bibitem[MM10]{MM10}
K.~Makarychev and Y.~Makarychev.
\newblock Metric extension operators, vertex sparsifiers and lipschitz
  extendability.
\newblock In {\em 51st Annual Symposium on Foundations of Computer Science},
  pages 255--264. IEEE, 2010.

\bibitem[PS89]{PS89}
D.~Peleg and A.~A. Sch{\"a}ffer.
\newblock Graph spanners.
\newblock {\em J. Graph Theory}, 13(1):99--116, 1989.

\bibitem[Rao87]{Rao87}
S.~Rao.
\newblock Finding near optimal separators in planar graphs.
\newblock In {\em 28th Annual Symposium on Foundations of Computer Science},
  pages 225--237. IEEE, 1987.

\end{thebibliography}
}

\end{document}